\definecolor{myurlcolor}{rgb}{0,0,0.7}
\definecolor{myrefcolor}{rgb}{0.8,0,0}
\newcommand{\C}{\mathbb{C}}
\newcommand{\N}{\mathbb{N}}
\newcommand{\R}{\mathbb{R}}
\newcommand{\ra}{\rightarrow}
\newcommand{\cone}{\mathrm{cone}}
\newcommand{\omin}{\otimes_{\min}}
\newcommand{\omax}{\otimes_{\max}}
\newcommand{\lin}{\mathrm{lin}}
\newcommand{\tr}{\mathrm{tr}}
\newcommand{\id}{\mathrm{id}}
\newcommand{\ex}{\mathrm{ex}}
\theoremstyle{plain}
\newtheorem{thm}{Theorem}
\newtheorem{lem}[thm]{Lemma}
\newtheorem{prop}[thm]{Proposition}
\newtheorem{cor}[thm]{Corollary}
\newtheorem{qstn}[thm]{Question}
\newtheorem{defn}[thm]{Definition}
\theoremstyle{definition}
\newtheorem{expl}[thm]{Example}
\theoremstyle{remark}
\newcommand{\beq}{\begin{equation}}
\newcommand{\eeq}{\end{equation}}
\begin{document}



\title[Duality in Bell scenarios]{Polyhedral duality in Bell scenarios with two binary observables}

\author{Tobias Fritz}
\address{ICFO--Institut de Ciencies Fotoniques\\ 
Mediterranean Technology Park\\ 
08860 Castelldefels (Barcelona)\\ 
Spain}
\email{tobias.fritz@icfo.es}

\keywords{polyhedral duality; Bell inequalities; no-signaling boxes}

\subjclass[2010]{}

\thanks{\textit{Acknowledgements.} The author would like to thank Chirag Dhara for initiating this by pointing to~\cite{Sliwa}; Jean-Daniel Bancal for creating Table~\ref{correspond}; Antonio Ac\'in, Carlos Palazuelos, Ignacio Villanueva and Gonzalo de la Torre for helpful discussion; and the EU STREP QCS for financial support.}

\begin{abstract}
For the Bell scenario with two parties and two binary observables per party, it is known that the no-signaling polytope is the polyhedral dual (polar) of the Bell polytope. Computational evidence suggests that this duality also holds for three parties. Using ideas of Werner, Wolf, \.Zukowski and Brukner, we prove this for any number of parties by describing a simple linear bijection mapping (facet) Bell inequalities to (extremal) no-signaling boxes and vice versa. Furthermore, a symmetry-based technique for extending Bell inequalities (resp.~no-signaling boxes) with two binary observables from $n$ parties to $n+1$ parties is described; the Mermin--Klyshko family of Bell inequalities arises in this way, as well as $11$ of the $46$ classes of facet Bell inequalities for $3$ parties. Finally, we ask whether the set of quantum correlations is self-dual with respect to our transformation. We find this not to be the case in general, although it holds for $2$ parties on the level of correlations. This self-duality implies Tsirelson's bound for the CHSH inequality.
\end{abstract}

\maketitle

\section{Introduction}
\label{introduction}

\subsection*{Nonlocal correlations}

Since the groundbreaking work of Bell~\cite{Bell}, it is known that the correlations measurable in most quantum systems cannot be explained by classical probabilistic theories with only local interactions.

In more detail, a \emph{Bell scenario} is a triple of natural numbers $(n,k,l)$ where $n$ is the number of \emph{parties}, $k$ is the number of \emph{observables} per party and $l$ is the number of \emph{outcomes} of each observable of each party. We think of each party as a physicist experimenting in his own laboratory, and of the different laboratories as so far apart that any direct interaction between them can be excluded. Now each party receives a physical system---typically an atom or a photon---from a common source, and applies to it a certain measurement chosen at their own free will. The observables of party $j\in\{1,\ldots,n\}$ are denoted by $A^s_j$, with the index $s$ taking one of $k$ different values (with ``$s$'' standing for measurement \textit{s}etting); upon measurement, each observable realizes one out of its $l$ possible outcomes. If each party $j$ chooses to measure the observable $A^{s_j}_j$ on the system they have received, we describe this by a vector of settings $\vec{s}=s_1\ldots s_n$. The resulting statistics form a probability distribution over the $l^n$ possible joint outcomes of the observables
$$
A^{s_1}_1,\ldots, A^{s_n}_n .
$$
Upon varying $\vec{s}$, one obtains a family of joint distributions indexed by $\vec{s}$. The question asked by Bell was, which families---or \emph{correlations}---can arise in this way? He found that quantum theory allows correlations which cannot be obtained by classical probability theory and local interactions alone; there are correlations which defy \emph{local realism}. This phenomenon is now known as \emph{quantum nonlocality}.

The possibilities and limitations of quantum nonlocality have become a quite active field of research. As originally discovered by Tsirelson~\cite{Tsi2} and later popularized by Popescu and Rohrlich~\cite{PR}, there are correlations which respect causality in the sense of satisfying the so-called \emph{no-signaling principle}, but nevertheless cannot be realized in quantum theory, namely the \emph{Popescu--Rohrlich box}, or \emph{PR-box}. We refer to literature like~\cite{MAG,Sca} for further background on the no-signaling principle and the PR-box.

\subsection*{Main result.}

The contribution of this paper concerns the understanding of the set of local realistic correlations and the set of all no-signaling correlations. These are convex sets with a finite number of extreme points, known as the \emph{Bell polytope} and the \emph{no-signaling polytope}, respectively; for technical reasosns, we drop the normalization of probability and work with convex cones instead of convex polytopes. Our main result concerns $(n,2,2)$ scenarios, and identifies the no-signaling cone in such a scenario as the polyhedral dual of the Bell cone. This means that there is a linear bijective correspondence between inequalities valid for the Bell cone---the \emph{Bell inequalities}---and no-signaling boxes. In particular, this implies a bijective correspondence between facet Bell inequalities and extremal no-signaling boxes and explains why the authors of~\cite{PBS} found the same number of extremal no-signaling boxes in the $(3,2,2)$ scenario as Sliwa~\cite{Sliwa} found facet Bell inequalities. This coincidence has already been noted in~\cite{PBS,ycats}, where no explanation was found. Similar questions have been asked, and partially answered, in~\cite{HWB}.

Our main technical innovation is the use of convex cones. Dropping the normalization of probabilities leads to substantial simplifications as compared to working with convex polytopes. In comparison to the approach relying on tensor products of Banach spaces~\cite{JPPVW}, where one can treat the correlations only on the level of correlators while ignoring the marginal probabilities, using tensor products of convex cones naturally allows---in fact, necessitates---also the consideration of marginal probabilities. On the other hand, while the Banach space approach allows a mathematically natural description of quantum correlations in terms of tensor products of operator spaces~\cite{JPPVW}, we do not know at the moment whether and how this might be possible in our formalism; the theory of operator systems~\cite{CE} and their tensor products~\cite{KPTT} seems like a good candidate framework.

\subsection*{Summary}

We now outline the content of this paper in more detail. Firstly, the mathematical Appendices~\ref{tennot} and~\ref{conesapp} provide the necessary background material on tensor notation, convex cones, their tensor products, and duality. Since this does not seem to be standard material in nonlocality theory, we have included it here for convenient reference. The notation explained in Section~\ref{tennot} is standard in differential geometry. Section~\ref{conesapp} does not contain any original research either; rather, much of it is due to Namioka and Phelps~\cite{NP}. In Section~\ref{mainsec}, we start the main text with definitions and basic properties of the cone of local realistic correlations (the \emph{Bell cone}) $B_n^{2,2}$ and the cone of no-signaling correlations $NS_n^{2,2}$ (the \emph{no-signaling cone}) in $(n,2,2)$ scenarios. We identify them as $n$-fold tensor products of a certain $3$-dimensional cone $Sq$. While these observations seem to be folklore even more generally for all $(n,k,l)$ scenarios, we have not been able to find a detailed account anywhere else. The second part of Section~\ref{mainsec} states and proves the main theorem: the polyhedral dual of $B_n^{2,2}$ is $NS_n^{2,2}$. We also observe that this cannot hold in any other nontrivial Bell scenario $(n,k,l)$ with $(k,l)\neq (2,2)$. Section~\ref{apps} then continues by presenting some applications of our result. After discussing how to use symmetries in order to construct no-signaling boxes in the $(n+1,2,2)$ scenario from no-signaling boxes in the $(n,2,2)$ scenario, we apply our duality to transfer this to a construction of $(n+1,2,2)$-scenario Bell inequalities from $(n,2,2)$-scenario Bell inequalities. Besides the Mermin--Klyshko~\cite{GB} family of Bell inequalities, we also find that $11$ of the $46$ symmetry classes of facet Bell inequalities in the $(3,2,2)$ case~\cite{Sliwa} arise in this way. We also note that our duality theorem contains the classification of full-correlator Bell inequalities in the $(n,2,2)$ scenarios~\cite{WW,ZB} as a special case. The final Section~\ref{quantum} is devoted to the study of how the quantum set behaves under our duality. We argue that a reasonable hypothesis is to conjecture the quantum set to be self-dual. We find that every self-dual cone of correlations which respects the symmetries violates the CHSH Bell inequality~\cite{CHSH} precisely as much as quantum correlations do~\cite{Tsi2}. Finally, we use some recent results on quantum correlations in order to show that our self-duality hypothesis does not hold in the $(3,2,2)$ scenario.

\subsection*{Informal description of the duality}

We now describe how to transform a Bell inequality in the $(n,2,2)$ Bell scenario into a no-signaling box in the same scenario, and vice versa. This is an informal description of our main result Theorem~\ref{mainthm}. For a rigorous treatment and the proofs, see Section~\ref{mainsec}.

To begin, start with a Bell inequality written in correlator form and normalize it so that the right-hand side is $1$. For the sake of having a concrete example, we illustrate the procedure with inequality \#17 from~\cite{Sliwa}, which reads, in our notation and in normalized form,
\begin{align*}
\frac{1}{4}\langle A_1^1\rangle + \frac{1}{4}\langle A_1^2\rangle + \frac{1}{4}\langle A_1^1 A_2^1\rangle + \frac{1}{4}\langle A_1^2 A_2^1\rangle + \frac{1}{4}\langle A_1^1 A_3^1\rangle + \frac{1}{4}\langle A_1^2 A_3^1\rangle&\\
- \frac{1}{4}\langle A_1^1 A_2^1 A_3^1\rangle - \frac{1}{4}\langle A_1^2 A_2^1 A_3^1\rangle + \tfrac{1}{2}\langle A_1^1 A_2^2 A_3^2\rangle - \tfrac{1}{2}\langle A_1^2 A_2^2 A_3^2\rangle &\leq 1 \:.
\end{align*}
Now take this inequality and apply the replacements
\begin{align}
\begin{split}
\label{traf}
A_j^1&\longrightarrow A_j^1 + A_j^2 \\
A_j^2&\longrightarrow A_j^1 - A_j^2 
\end{split}
\end{align}
for every party $j=1,\ldots,n$. Upon using linearity of expectation values, this replaces each term by a sum of terms; for example,
$$
\langle A_1^1 A_2^2\rangle\longrightarrow \langle (A_1^1+A_1^2)(A_2^1-A_2^2)\rangle = \langle A_1^1A_2^1\rangle - \langle A_1^1A_2^2\rangle + \langle A_1^2 A_2^1\rangle - \langle A_1^2 A_2^2\rangle \:.
$$
Now collect all the terms and throw away the right-hand side, as the ensuing expression is typically not going to be a valid Bell inequality. In the example, we obtain by this procedure the expression
\begin{align}
\begin{split}
\label{boxex}
\tfrac{1}{2}\langle A_1^1\rangle + \tfrac{1}{2}\langle A_1^1 A_2^1\rangle + \tfrac{1}{2}\langle A_1^1 A_2^2\rangle + \tfrac{1}{2}\langle A_1^1 A_3^1\rangle + \tfrac{1}{2}\langle A_1^1 A_3^2\rangle - \tfrac{1}{2}\langle A_1^1 A_2^1 A_3^1\rangle - \tfrac{1}{2}\langle A_1^1 A_2^1 A_3^2\rangle \\
- \tfrac{1}{2}\langle A_1 A_2^2 A_3^1\rangle - \tfrac{1}{2}\langle A_1^1 A_2^2 A_3^2\rangle + \langle A_1^2 A_2^1 A_3^1\rangle - \langle A_1^2 A_2^1 A_3^2\rangle - \langle A_1^2 A_2^2 A_3^1\rangle + \langle A_1^2 A_2^2 A_3^2\rangle \:.
\end{split}
\end{align}
This expression defines a no-signaling box by taking the value of each correlator of the box to be given by its coefficient in the expression; if a certain correlator does not appear, then it should be regarded as having a coefficient of $0$. We recognize the no-signaling box defined by~(\ref{boxex}) as precisely no-signaling box \#40 of~\cite{PBS}.

Similarly, one can also start from a no-signaling box and transform it into a valid Bell inequality. Since this can be done precisely by following the same procedure backwards, we omit the details.

The map~(\ref{traf}) is a transformation which has been employed by Werner and Wolf~\cite{WW} and \.Zukowski and Brukner~\cite{ZB} in a very similar context. Therefore, the proof of our upcoming Theorem~\ref{mainthm} is essentially theirs.

In the following, we will make this informal description more precise and prove that it implements a duality between the Bell polytope and the no-signaling polytope. We use tensor notation (Appendix~\ref{tennot}) and the formalism of convex cones (Appendix~\ref{conesapp}), which we regard as the most adequate mathematical formalism for our present purposes.


\section{Polyhedral duality in Bell scenarios}
\label{mainsec}

\subsection{The Bell cone and the no-signaling cone.}

From here on, we write $A^{-1}_j$ and $A^{+1}_j$ for the two possible measurements of party $j=1,\ldots,n$, where both observables are taken to have outcomes in $\{-1,+1\}$. 

In order to allow for uniform notation which both covers full correlators like $\langle A^{\pm 1}_1\cdots A^{\pm 1}_n\rangle$ and single-party marginals like $\langle A^{\pm 1}_j\rangle$, we introduce a third setting ``$0$'' for each party which means ``do not measure at all''. Then each full or marginal correlator can be written as
$$
\langle A^{s_1}_1\cdots A^{s_n}_n\rangle
$$
for a certain $n$-tuple of settings $\vec{s}=\left(s_1,\ldots,s_n\right)\in\{-1,0,+1\}^n$. A full set of correlations is defined by specifying a number for this correlator for each $\vec{s}\in\{-1,0,+1\}^n$. We think of such a specification as a tensor $x^{\vec{s}}=x^{s_1\ldots s_n}\in\otimes^n\R^3$.


For $n$-tuples of settings $\vec{s}\in\{-1,+1\}^n$ and outcomes $\vec{t}\in\{-1,+1\}^n$, the outcome probability is given by, in terms of the correlators,
\beq
\label{nosig}
P(\vec{t}|\vec{s}) = g_{v_1\ldots v_n}(\vec{t},\vec{s}) \langle A_1^{v_1} \cdots A_n^{v_n} \rangle
\eeq
where the tensor $g_{v_1\ldots v_n}(\vec{t},\vec{s})=g_{\vec{v}}(\vec{t},\vec{s})$ is defined as
\beq
g_{\vec{v}}(\vec{t},\vec{s}) = 2^{-n} \prod_j \left\{ \begin{array}{ccl} t_j & \textrm{if} & v_j=s_j \\ 1 & \textrm{if} & v_j=0 \\ 0 & \textrm{if} & v_j=-s_j \end{array} \right.
\eeq
The \emph{no-signaling cone} $NS_n^{2,2}\subseteq\R^{3^n}=\otimes^n\R^3$ is defined by requiring all these probabilities to be nonnegative. In other words, a point $x^{\vec{s}}=x^{s_1\ldots s_n}\in\otimes^n\R^3$ lies in $NS_n^{2,2}$ if and only if
\beq
\label{nspos}
g_{\vec{v}}(\vec{t},\vec{s}) x^{\vec{v}} \geq 0 \quad\forall \vec{s},\vec{t}\in\{-1,+1\}^n,
\eeq
The component $x^{0\ldots 0}$ represents the value of the correlator $\langle A^0_1\ldots A^0_n\rangle$ with all settings equal to $0$, which is the normalization of probability; the inequalities~(\ref{nspos}) allow it to be any nonnegative number. The usual no-signaling polytope, as introduced in~\cite{BLMPPR}, can be obtained by intersecting $NS_n^{2,2}$ with the hyperplane $x^{0\ldots 0}=1$.

We now discuss the analogous structure for those correlations compatible with local realistic models. A local deterministic strategy is defined by specifying $n$-tuples of signs $\vec{a}^{-1}\in\{-1,+1\}^n$ and $\vec{a}^{+1}\in\{-1,+1\}^n$ such that the correlators of this strategy are given by, with $a^0_j=1$ for all $j$,
$$
\langle A^{s_1}_1\ldots A^{s_n}_n\rangle = a^{s_1}_1\cdots a^{s_n}_n ,
$$
which is again a tensor in $\R^{3^n}=\otimes^n\R^3$. By definition, the \emph{Bell cone} $B_n^{2,2}\subseteq\R^{3^n}$ is
\beq
\label{defB}
B_n^{2,2} = \cone\left(\left\{ a^{s_1}_1\cdots a^{s_n}_n \::\: \vec{a}^{\pm 1}\in\{-1,+1\}^n, \: \vec{a}^0 = \vec{1} \right\}\right)
\eeq
Just as in the no-signaling case, one obtains the usual Bell polytope by intersecting $B_n^{2,2}$ with the hyperplane $x^{0\ldots 0}=1$.

\begin{lem}
For $n=1$, the no-signaling cone and the Bell cone coincide, and are given by the \emph{square}
\beq
\label{Sqdef}
Sq = \left\{ x^s \in\R^3 \:\bigg|\: -x^0\leq x^{-1}\leq x^0 ,\: -x^0 \leq x^{+1}\leq x^0 \right\} .
\eeq
\end{lem}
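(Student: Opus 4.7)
The plan is to identify each of $NS_1^{2,2}$ and $B_1^{2,2}$ with $Sq$ separately. The inclusion $B_1^{2,2} \subseteq NS_1^{2,2}$ holds in every $(n,2,2)$ scenario, because the generators listed in~(\ref{defB}) come from deterministic local strategies whose outcome probabilities are simply point masses; they manifestly satisfy~(\ref{nspos}). So it is enough to check $NS_1^{2,2} \subseteq Sq$ and $Sq \subseteq B_1^{2,2}$.

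First I would specialize the coefficient tensor to $n=1$: for fixed $s\in\{-1,+1\}$, the scalar $g_v(t,s)$ vanishes at $v=-s$, equals $\tfrac{1}{2}$ at $v=0$, and equals $\tfrac{t}{2}$ at $v=s$. The no-signaling inequalities~(\ref{nspos}) therefore collapse to
\[
\tfrac{1}{2}\bigl(x^0 + t\,x^s\bigr) \geq 0 \quad \text{for all } s,t\in\{-1,+1\},
\]
which is exactly $|x^s|\leq x^0$ for $s\in\{-1,+1\}$, the defining system of $Sq$. Hence $NS_1^{2,2}=Sq$.

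For the remaining inclusion $Sq\subseteq B_1^{2,2}$, observe that $B_1^{2,2}$ is the conic hull of the four generators $v_{a,b}=(1,a,b)$ indexed by $(a,b)\in\{-1,+1\}^2$, written in the coordinate order $(x^0,x^{-1},x^{+1})$. Given a target $(x^0,x^{-1},x^{+1})\in Sq$, I would look for nonnegative coefficients $\lambda_{a,b}$ satisfying
\[
\sum_{a,b}\lambda_{a,b}=x^0,\qquad \sum_{a,b}a\,\lambda_{a,b}=x^{-1},\qquad \sum_{a,b}b\,\lambda_{a,b}=x^{+1}.
\]
Parametrizing the resulting one-parameter family of solutions by $t:=\lambda_{+1,+1}$, the three remaining coefficients are linear in $t$, and the nonnegativity constraints become
\[
\max\!\bigl(0,\tfrac{x^{-1}+x^{+1}}{2}\bigr)\leq t\leq \min\!\bigl(\tfrac{x^0+x^{-1}}{2},\tfrac{x^0+x^{+1}}{2}\bigr).
\]
A short case distinction on the sign of $x^{-1}+x^{+1}$, using only $|x^{\pm 1}|\leq x^0$, confirms this interval is nonempty and hence a valid decomposition exists.

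I do not anticipate any serious obstacle: conceptually the lemma just records that the unit square $[-1,+1]^2$ is the convex hull of its four vertices, lifted into a convex cone by reinstating the normalization coordinate $x^0$. The only mildly delicate point is the interval-nonemptiness check above, which reduces to a one-line verification from the four inequalities $-x^0\leq x^{\pm 1}\leq x^0$.
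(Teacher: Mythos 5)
Your proof is correct and follows essentially the same route as the paper: specialize the no-signaling inequalities~(\ref{nspos}) to $n=1$ to recover the defining inequalities of $Sq$, and identify the $n=1$ Bell cone with the conic hull of the four vertices. The only difference is that you explicitly verify the equivalence of the inequality description~(\ref{Sqdef}) and the generator description~(\ref{Sqex}) via the interval-nonemptiness computation, a step the paper asserts without proof when it calls the two descriptions of $Sq$ equivalent.
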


Equivalently, $Sq$ can be defined as
\beq
\label{Sqex}
Sq = \cone\left\{ \left(\begin{array}{c}-1\\ +1\\ -1\end{array}\right), \left(\begin{array}{c}-1\\ +1\\ +1\end{array}\right), \left(\begin{array}{c}+1\\ +1\\ -1\end{array}\right), \left(\begin{array}{c}+1\\ +1\\ +1\end{array}\right) \right\} ,
\eeq
This cone is a square in the sense that its normalized part $\{x^s\in Sq \:|\: x^0=1\}$ is a square; see Figure~\ref{sqfig}.

\begin{proof}
For $n=1$, the inequalities~(\ref{nspos}) defining the no-signaling cone are given by
$$
x_0 - x_{-1} \geq 0, \quad x_0 + x_{-1} \geq 0, \quad x_0 - x_{+1} \geq 0,\quad x_0 + x_{+1} \geq 0,
$$
which are precisely the defining inequalities of $Sq$. On the other hand, evaluating the definition~(\ref{defB}) of the Bell cone for $n=1$ gives precisely~(\ref{Sqex}).
\end{proof}

While we have not been able to find a formal proof in the literature, it seems to be folklore knowledge that Bell polytopes can be identified with minimal tensor products, while no-signaling polytopes can be thought of as maximal tensor products. We now make this precise in our framework.

\begin{lem}
\begin{enumerate}
\item The Bell cone $B_n^{2,2}$ equals the minimal tensor product
\beq
\label{Bmin}
B_n^{2,2} = \underbrace{Sq\otimes_{\min}\ldots\otimes_{\min} Sq}_{n\textrm{ factors}} .
\eeq
\item The no-signaling cone $NS_n^{2,2}$ equals the maximal tensor product
\beq
\label{NSmax}
NS_n^{2,2} = \underbrace{Sq\otimes_{\max}\ldots\otimes_{\max} Sq}_{n\textrm{ factors}} .
\eeq
\end{enumerate}
\end{lem}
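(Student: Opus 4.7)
The strategy is to unwind the two tensor-product constructions and recognize them directly in the definitions of $B_n^{2,2}$ and $NS_n^{2,2}$. For part~(a), I would use the fact that the minimal tensor product of finitely many cones is by definition the conic hull of simple tensors $v_1\otimes\cdots\otimes v_n$ with $v_j\in Sq$, and that by conic multilinearity this equals the conic hull of simple tensors whose factors are extreme rays of $Sq$. Reading off the four extreme rays from~(\ref{Sqex}), I see that they are exactly the one-party local deterministic strategies $(a^0_j,a^{-1}_j,a^{+1}_j)=(1,\pm 1,\pm 1)$. Hence their $n$-fold simple tensors
\[
a^{s_1}_1\cdots a^{s_n}_n, \qquad \vec{a}^{\pm 1}\in\{-1,+1\}^n,\ \vec{a}^0=\vec{1},
\]
are precisely the generators appearing on the right-hand side of~(\ref{defB}), so $\otimes_{\min}^n Sq = B_n^{2,2}$.

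For part~(b), I would use the dual characterization of the maximal tensor product recalled in Appendix~\ref{conesapp}: $x\in \otimes_{\max}^n Sq$ iff $(\phi_1\otimes\cdots\otimes\phi_n)(x)\geq 0$ for all $\phi_j\in Sq^*$. By conic linearity it suffices to test $\phi_j$ on a set of generators of the dual cone $Sq^*$. Reading off the four facet inequalities from~(\ref{Sqdef}), the extreme rays of $Sq^*$ are the four functionals
\[
x^0 - x^{-1},\quad x^0+x^{-1},\quad x^0-x^{+1},\quad x^0+x^{+1},
\]
indexed by a single-party setting $s\in\{-1,+1\}$ and outcome $t\in\{-1,+1\}$. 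The key computation is then to identify these with the one-party version of the tensor $g_{\vec{v}}(\vec{t},\vec{s})$: writing $g_{\vec{v}}(\vec{t},\vec{s}) = 2^{-n}\prod_j h_{v_j}(t_j,s_j)$ where $h_v(t,s)$ takes the values $1,t,0$ according to $v=0,\,v=s,\,v=-s$, one checks that for each fixed $(s,t)$ the vector $(h_0,h_{-1},h_{+1})$ is exactly one of the four facet functionals of $Sq$ above. Hence the defining inequalities~(\ref{nspos}) of $NS_n^{2,2}$ coincide, up to the harmless global factor $2^{-n}$, with the simple-tensor tests for membership in $\otimes_{\max}^n Sq$, proving $NS_n^{2,2}=\otimes_{\max}^n Sq$.

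The main obstacle I anticipate is purely bookkeeping: one has to keep the index conventions for the settings $\{-1,0,+1\}$ straight and verify carefully that the four generators of $Sq^*$ match the single-party specialization of $g_{\vec v}(\vec t,\vec s)$ coordinate by coordinate. Once this matching is in place, both statements follow immediately from the definitions of $\otimes_{\min}$ and $\otimes_{\max}$, with no further inequality manipulations required.
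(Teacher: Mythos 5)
Your proposal is correct and follows essentially the same route as the paper: part (a) by observing that the generators of $B_n^{2,2}$ in~(\ref{defB}) are exactly the simple tensors of extreme rays of $Sq$ (the paper is even terser, omitting the explicit reduction to extreme factors that you spell out), and part (b) by factorizing $g_{\vec v}(\vec t,\vec s)$ into the single-party functionals $h_{v}(t,s)$, which for the four choices of $(s,t)$ are precisely the generators of $Sq^*$ from~(\ref{Sqdual}). No substantive difference; your bookkeeping of the $2^{-n}$ factor and the index conventions is exactly the check the paper leaves implicit.
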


\begin{proof}
\begin{enumerate}
\item
By the definition~(\ref{defB}), every local deterministic point $x^{\vec{s}}=a_1^{s_1}\cdots a_n^{s_n}$ is a tensor product of points $a_j^s\in Sq$. The assertion follows since both sides of~(\ref{Bmin}) are defined as the cone generated by these tensor products.
\item This is the case since inequality~(\ref{nspos}) can be rewritten as
$$
h_{v_1}(t_1,s_1)\cdots h_{v_n}(t_n,s_n) x^{v_1\ldots v_n} \geq 0
$$
with
$$
h_v(t,s) = \left\{ \begin{array}{ccl} t & \textrm{if} & v=s \\ 1 & \textrm{if} & v=0 \\ 0 & \textrm{if} & v=-s \end{array} \right.
$$
\end{enumerate}
\end{proof}

\subsection{Bell inequalities.} By definition, a \emph{Bell inequality} in the $(n,2,2)$ scenario is an element of the dual cone $(B_n^{2,2})^*$, i.e.~a tensor $f_{\vec{s}}=f_{s_1\ldots s_n}$ such that
$$
f_{s_1\ldots s_n}x^{s_1\ldots s_n}\geq 0
$$
for all $x^{\vec{s}}\in B_n^{2,2}$. The simplest examples are the trivial Bell inequalities~(\ref{nspos}), which express the nonnegativity of outcome probabilities. A Bell inequality is a \emph{facet} if it is an extreme ray of $(B_n^{2,2})^*$.

\subsection{Main Theorem.} After formalizing these preliminaries, we are now almost in a position to state and prove our main duality theorem. 

In differential geometry and general relativity~\cite{Einstein}, the metric tensor $g_{st}$ allows a conversion from vectors $x^s\in V$ to covectors $g_{st}x^t\in V^*$ (``lowering the index'') and conversely from covectors $x_s\in V^*$ to vectors $g^{st}x_t\in V$ (``raising the index''). The same r\^ole is played in the Hilbert space formalism of quantum mechanics by the scalar product, which defines the (antilinear) map assigning to each ket $|\psi\rangle$ its associated bra $\langle\psi|$. We now introduce a similar tensor in our framework, which lets us identify $Sq$ with $Sq^*$. It is the matrix representation of~(\ref{traf}).

\begin{lem}
\begin{enumerate}
\item The two tensors
$$
F^{st} = \left( \begin{array}{ccc} 1 & 0 & 1\\ 0 & 1 & 0\\ 1 & 0 & -1 \end{array}\right) \qquad F_{st} = \left( \begin{array}{ccc} \tfrac{1}{2} & 0 & \tfrac{1}{2}\\ 0 & 1 & 0\\ \tfrac{1}{2} & 0 & -\tfrac{1}{2} \end{array}\right)
$$
are inverse to each other: $F_{st}F^{tv}=\delta_s^v$ and $F^{st}F_{tv}=\delta_v^s$.
\item 
The linear maps
\beq
\label{F}
Sq\to Sq^*, \quad x^s \mapsto F_{st}x^t \qquad Sq^*\to Sq,\quad x_s\mapsto F^{st}x_t
\eeq
are mutually inverse and implement an isomorphism $Sq\stackrel{\cong}{\longleftrightarrow}Sq^*$ (see Figure~\ref{sqfig}).
\end{enumerate}
\end{lem}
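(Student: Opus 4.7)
The plan is to dispatch part (a) by direct matrix multiplication and then deduce part (b) by tracking the action of $F_{st}$ on the four extreme rays of $Sq$ listed in~(\ref{Sqex}).

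For part (a), both products $F_{st}F^{tv}$ and $F^{st}F_{tv}$ are $3\times 3$ matrix products that unfold into a handful of arithmetic checks, with no subtlety beyond indexing the rows and columns consistently with the setting labels $s,t\in\{-1,0,+1\}$. Writing out the two products entry by entry will immediately yield the $3\times 3$ identity, which is all that is required.

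Given (a), the two maps in~(\ref{F}) are automatically mutually inverse as linear maps, so all that is left for (b) is to verify that $x^s\mapsto F_{st}x^t$ carries $Sq$ into $Sq^*$ (the reverse inclusion then follows by applying the inverse). Since both cones are polyhedral, it is enough to check this on generators. The four extreme rays of $Sq$ are explicit in~(\ref{Sqex}); pairing them against a general covector shows that $Sq^*$ is the cone cut out by $\pm x_{-1} \pm x_{+1}\leq x_0$, whose extreme rays are $(\pm 1,1,0)^T$ and $(0,1,\pm 1)^T$. I would then apply $F_{st}$ to each of the four generators of $Sq$ in turn and check that the resulting triples are precisely these four extreme rays of $Sq^*$, possibly in some permuted order. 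Because a linear bijection that sends a set of extreme rays of one polyhedral cone bijectively onto a set of extreme rays of another is automatically a cone isomorphism, this finishes the proof of both directions at once.

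The ``hard'' part is merely notational: one must line up the row and column indexing of the displayed matrices $F^{st}$ and $F_{st}$ with the chosen ordering of the settings $\{-1,0,+1\}$ so that the image triples produced by the multiplication really correspond to the coordinate triples being compared with the generators of $Sq^*$. Once this bookkeeping is fixed, the entire lemma reduces to a finite sequence of elementary arithmetic checks; the geometric content is that the square $Sq$ is self-dual under the explicit linear change of basis given by the one-party version of the transformation~(\ref{traf}), and this self-duality is exactly what the computation confirms.
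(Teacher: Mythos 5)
Your proposal is correct and takes essentially the same route as the paper: part (a) is disposed of by matrix multiplication, and part (b) by writing down the four extreme rays $(\pm 1,+1,0)$, $(0,+1,\pm 1)$ of $Sq^*$ and checking that $F_{st}$ carries the four generators of $Sq$ from~(\ref{Sqex}) bijectively onto them. One small wording caveat: the general principle you invoke should say that the map carries the \emph{full} generating set of one cone onto a generating set of the other (a bijection onto a proper subset of extreme rays would not suffice), but since your computation verifies exactly this for all four rays on each side, the argument stands.
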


\begin{figure}
\begin{center}
\begin{tikzpicture}
\draw[->] (-2,0) -- (2,0) node[anchor=north west] {$x^{-1}$};
\draw[->] (0,-2) -- (0,2) node[anchor=south east] {$x^{+1}$};
\draw[thick] (-1.2,-1.2) -- (-1.2,1.2) -- (1.2,1.2) -- (1.2,-1.2) -- (-1.2,-1.2);
\node[anchor=north west] at (1.2,0) {$x^0$};
\node[anchor=south east] at (0,1.2) {$x^0$};
\node at (-2,2) {$Sq$};
\draw (1.2,-.2) -- (1.2,.2);
\draw (-.2,1.2) -- (.2,1.2);
\draw[<->] (2,.5) .. controls (3.5,1) .. (5,.5);
\node[anchor=south] at (3.5,.8) {(\ref{F})};
\draw[->] (5,0) -- (9,0) node[anchor=north west] {$x_{-1}$};
\draw[->] (7,-2) -- (7,2) node[anchor=south east] {$x_{+1}$};
\draw[thick] (7,-1.2) -- (5.8,0) -- (7,1.2) -- (8.2,0) -- (7,-1.2);
\node[anchor=north west] at (8.2,0) {$x_0$};
\node[anchor=south east] at (7,1.2) {$x_0$};
\node at (5,2) {$Sq^*$};
\end{tikzpicture}
\end{center}
\caption{Sections of the cone $Sq$ and its dual $Sq^*$ at a fixed value of $x_0$ respectively $x^0$. The isomorphism~(\ref{F}) is a linear bijection mapping $Sq$ to $Sq^*$.}
\label{sqfig}
\end{figure}
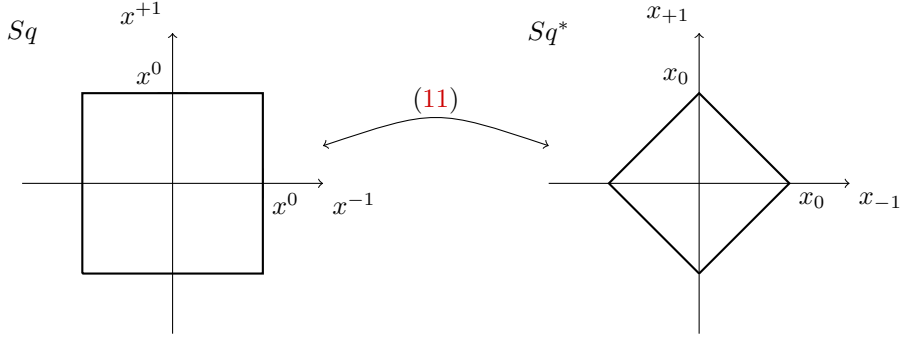

\begin{proof}
\begin{enumerate}
\item Matrix multiplication.
\item By~(\ref{coneS2}) and~(\ref{Sqdef}), we have
\beq
\label{Sqdual}
Sq^* = \cone\left\{ (-1,+1,0),\: (0,+1,-1),\: (+1,+1,0),\: (0,+1,+1)\right\} ,
\eeq
and a direct calculation shows that~(\ref{F}) implements a bijection between the extreme rays of $Sq$ and those of $Sq^*$.
\end{enumerate}
\end{proof}

Therefore, also the $n$-fold tensor product of~(\ref{F}), which is the pair of mutually inverse linear maps
$$
x^{s_1\ldots s_n} \mapsto F_{s_1t_t}\cdots F_{s_nt_n} x^{t_1\ldots t_n} ,\qquad x_{s_1\ldots s_n} \mapsto F^{s_1t_t}\cdots F^{s_nt_n} x_{t_1\ldots t_n}
$$
implement an isomorphism
\beq
\label{Fn}
Sq\omax\ldots\omax Sq\stackrel{\cong}{\longleftrightarrow} Sq^*\omax\ldots\omax Sq^* .
\eeq

\begin{thm}
\label{mainthm}
This implements a linear isomorphism $NS_n^{2,2}\stackrel{\cong}{\longleftrightarrow}(B_n^{2,2})^*$.
\end{thm}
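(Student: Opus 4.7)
The plan is to assemble the theorem from three ingredients already in hand: the identifications $B_n^{2,2}=Sq^{\otimes_{\min} n}$ and $NS_n^{2,2}=Sq^{\otimes_{\max} n}$ from the previous lemma; the isomorphism $F:Sq\stackrel{\cong}{\to}Sq^*$ just established; and the standard min/max duality for cones $(V\omin W)^*\cong V^*\omax W^*$ which, presuming it appears in Appendix~\ref{conesapp}, is exactly the abstract statement the theorem is meant to upgrade to concrete form.

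First I would dualize the Bell cone. Iterating the duality $(V\omin W)^*\cong V^*\omax W^*$ $n-1$ times gives
\[
(B_n^{2,2})^* \;=\; \bigl(Sq\omin\cdots\omin Sq\bigr)^* \;\cong\; Sq^*\omax\cdots\omax Sq^* .
\]
So it remains to produce an isomorphism $Sq^{\omax n}\cong (Sq^*)^{\omax n}$. For this I would invoke functoriality of the maximal tensor product: given cone isomorphisms $\phi_j:C_j\to D_j$, the linear map $\phi_1\otimes\cdots\otimes\phi_n$ sends $C_1\omax\cdots\omax C_n$ bijectively onto $D_1\omax\cdots\omax D_n$. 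Applying this with each $\phi_j=F$ (the isomorphism $Sq\to Sq^*$ of the preceding lemma) yields exactly the tensor map~\eqref{Fn}, and by the preceding two isomorphisms this map carries $NS_n^{2,2}=Sq^{\omax n}$ bijectively onto $(B_n^{2,2})^*$.

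The step I expect to be the most delicate is making sure the isomorphism produced by the abstract duality $(V\omin W)^*\cong V^*\omax W^*$ genuinely coincides, as a concrete linear map on the ambient tensor space $\otimes^n\R^3$, with the $n$-fold tensor product of $F$ described in~\eqref{Fn}. In principle this is just bookkeeping: the abstract isomorphism identifies a functional $f\in(V\omin W)^*$ with the tensor it represents via the canonical pairing, and $F$ was defined as the matrix implementing the self-duality of $Sq$ with respect to this pairing, so the identifications agree on elementary tensors and extend by linearity. If the appendix phrases the min/max duality in a different (but equivalent) way, it may be cleaner to bypass that step entirely and instead verify directly, using $F_{st}F^{tv}=\delta_s^v$, that for $x\in Sq^{\omax n}$ and $y\in Sq^{\omin n}$ the pairing $\bigl(F^{\otimes n}x\bigr)(y)$ is nonnegative, and that every element of $(B_n^{2,2})^*$ is reached in this way by a unique $x\in NS_n^{2,2}$; the latter reduces to the corresponding statement for $n=1$, which is exactly the content of the previous lemma together with the extremal-ray computation in~\eqref{Sqdual}.
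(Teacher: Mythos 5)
Your proposal is correct and follows essentially the same route as the paper: identify $(B_n^{2,2})^*$ with $(Sq^*)^{\otimes_{\max}n}$ via~(\ref{Bmin}) and~(\ref{minmaxdual}), then transport to $Sq^{\otimes_{\max}n}=NS_n^{2,2}$ using the $n$-fold tensor power of $F$ as in~(\ref{Fn}). The ``delicate'' identification you worry about is a non-issue in the paper's setup, since Lemma~\ref{tensordual} states the min/max duality as an equality of cones inside the same space $V_1^*\otimes\cdots\otimes V_n^*$ rather than as an abstract isomorphism.
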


In terms of the machinery we have built up, the proof is very simple and intuitive.

\begin{proof}
\begin{align*}
(B^{2,2}_n)^* &\stackrel{(\ref{Bmin})}{=} \left( Sq \otimes_{\mathrm{min}} \ldots \otimes_{\mathrm{min}} Sq \right)^* \\
 &\stackrel{(\ref{minmaxdual})}{=} Sq^* \otimes_{\mathrm{max}}  \ldots \otimes_{\mathrm{max}} Sq^* \\
 &\stackrel{(\ref{Fn})}{\cong} Sq \otimes_{\mathrm{max}}  \ldots \otimes_{\mathrm{max}} Sq \stackrel{(\ref{NSmax})}{=} NS^{2,2}_n . 
\end{align*}
\end{proof}

In particular, it follows that we have a bijective correspondence between (facet) Bell inequalities and (extremal) no-signaling boxes. The proof of the Theorem shows that this correspondence is precisely the one described in the Introduction. Before developing some more general theory, we now turn to examples.

For $n=2$, it is instructive to verify the Theorem explicitly. Since we have already given an example of the transformation from Bell inequalities to no-signaling boxes in the introduction, we now go the other way. With $n=2$, i.e.~in the CHSH scenario~\cite{CHSH}, there are two symmetry classes of extremal no-signaling boxes: the local deterministic boxes and the PR-boxes~\cite{PR,Sca,Tsi3}. As a representative of the first class, we consider the box with all observables having a deterministic $+1$ outcome; in our framework, this box is given by $\langle A_1^{s_1} A_2^{s_2}\rangle =1$ for all $\vec{s}=(s_1,s_2)$. By Theorem~\ref{mainthm}, the associated Bell inequality $f_{\vec{s}}x^{\vec{s}}\geq 0$ has coefficients
$$
f_{s_1 s_2} = F_{s_1 t_1} F_{s_2 t_2} \langle A_1^{t_1} A_2^{t_2}\rangle = \sum_{t_1,t_2} F_{s_1 t_1} F_{s_2 t_2} = \left(\sum_{t_1} F_{s_1 t_1}\right) \left(\sum_{t_2} F_{s_2 t_2}\right) 
$$
Since
$$
\sum_t F_{st} = \left\{\begin{array}{ccc} 1 & \textrm{for} & s=-1 \\ 1 & \textrm{for} & s=0 \\ 0 & \textrm{for} & s=+1 \end{array}\right. ,
$$
the resulting Bell inequality is
$$
x^{-1,-1}+x^{-1,0}+x^{0,-1}+x^{0,0}\geq 0 ,
$$
which is the nonnegativity of probability inequality~(\ref{nspos}) with $\vec{s}=(-1,-1)$ and $\vec{t}=(+1,+1)$. On the other hand, the other symmetry class of extremal boxes is the class of the PR-box
\beq
\langle A_1^{s_1} A_2^{s_2} \rangle = \begin{array}{c|ccc} s_1,\: s_2 & -1 & 0 & +1 \\\hline -1 & 1 & 0 & 1 \\ 0 & 0 & 1 & 0 \\ -1 & 1 & 0 & -1 \end{array} .
\eeq
The PR-box is very special in that it satisfies the simple relation $\langle A_1^{s_1} A_2^{s_2}\rangle = F^{s_1 s_2}$. Using this, the coefficients of the resulting Bell inequality can easily be calculated as
$$
f_{s_1 s_2} = F_{s_1 t_1} F_{s_2 t_2} \langle A_1^{t_1} A_2^{t_2}\rangle = F_{s_1 t_1} F_{s_2 t_2} F^{t_1 t_2} = \delta_{s_1}^{t_2} F_{s_2 t_2} = F_{s_1 s_2} .
$$
These are precisely the coefficients of the CHSH inequality
$$
\tfrac{1}{2} x^{-1,-1} + \tfrac{1}{2} x^{-1,+1} + \tfrac{1}{2} x^{+1,-1} - \tfrac{1}{2} x^{+1,+1} + x^{0,0} \geq 0 \:.
$$
This ends our discussion of the $n=2$ case.

For $n=3$, a case for which also both the facet Bell inequalities~\cite{Sliwa} and the extremal no-signaling boxes~\cite{PBS} are known, Theorem~\ref{mainthm} can be nicely verified by a computer calculation. In terms of the numberings of~\cite{PBS,Sliwa}, this has resulted in Table~\ref{correspond}.

We now continue the general theory.

\begin{table}[t]
\begin{tabular}{c|cccccccccccccccc}
Tight Bell inequality \# & 1 & 2  & 3  & 4  & 5  & 6  & 7  & 8  & 9 & 10 & 11 & 12 & 13 & 14 & 15 & 16 \\\hline
Extremal no-signaling box \# & 1 & 46 & 45 & 2 & 29 & 25 & 44 & 41 & 42 & 4 & 7 & 5 & 6 & 3 & 8 & 27 
\end{tabular}\\[.5cm]
\begin{tabular}{c|cccccccccccccccc}
Tight Bell inequality \# & 17 & 18 & 19 &20 & 21 & 22 & 23 & 24 & 25 & 26 & 27 & 28 & 29 &30 & 31 & 32 \\\hline 
Extremal no-signaling box \# & 40 & 11 & 12 & 10 & 19 & 21 & 9 & 30 & 32 & 31 & 33 & 39 & 38 & 17 & 35 & 14
\end{tabular}\\[.5cm]
\begin{tabular}{c|cccccccccccccccc}
Tight Bell inequality \# & 33 & 34 & 35 & 36 & 37 & 38 & 39 &40 & 41 & 42 & 43 & 44 & 45 & 46 \\\hline
Extremal no-signaling box \# & 34 & 13 & 16 & 26 & 28 & 24 & 22 & 23 & 43 & 36 & 37 & 18 & 15 & 20
\end{tabular}\\[.5cm]
\caption{Correspondence between the facet Bell inequalities as classified in~\cite{Sliwa} and the extremal no-signaling boxes as classified in~\cite{PBS}. (Courtesy of Jean-Daniel Bancal.)}
\label{correspond}
\end{table}

\begin{cor}
\label{cordual}
The correlations $x^{\vec{s}}$ lie in $B_n^{2,2}$ (resp.~$NS_n^{2,2}$) if and only if
\beq
\label{Fxy}
F_{s_1t_1}\cdots F_{s_nt_n} x^{s_1\ldots s_n} y^{t_1\ldots t_n} \geq 0
\eeq
for all $y^{\vec{s}}$ in $NS_n^{2,2}$ (resp.~$B_n^{2,2}$).
\end{cor}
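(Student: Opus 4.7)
The plan is to deduce this corollary directly from Theorem~\ref{mainthm} together with the reflexivity $(C^*)^* = C$, which holds since both $B_n^{2,2}$ and $NS_n^{2,2}$ are polyhedral and hence closed. The key observation is that the bilinear form on the left-hand side of~(\ref{Fxy}) is precisely the pairing of $x$ with the image of $y$ under the isomorphism~(\ref{Fn}): if $y^{\vec{t}} \in NS_n^{2,2}$, then $f_{\vec{s}} := F_{s_1 t_1}\cdots F_{s_n t_n} y^{\vec{t}}$ is, by Theorem~\ref{mainthm}, an element of $(B_n^{2,2})^*$, and~(\ref{Fxy}) is exactly $f_{\vec{s}} x^{\vec{s}} \geq 0$.

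For the first assertion I would unfold the definition of the dual cone: reflexivity gives $x \in B_n^{2,2}$ iff $f_{\vec{s}} x^{\vec{s}} \geq 0$ for every $f \in (B_n^{2,2})^*$. Theorem~\ref{mainthm} says that the $f$'s appearing here are exactly the images of elements $y \in NS_n^{2,2}$ under~(\ref{Fn}), so the condition becomes~(\ref{Fxy}) for all such $y$.

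The ``resp.'' half follows by symmetry once we dualize. From Theorem~\ref{mainthm} and reflexivity we obtain an isomorphism $(NS_n^{2,2})^* \cong B_n^{2,2}$, and since the matrix $F_{st}$ is symmetric, this dual isomorphism is implemented by the same formula~(\ref{Fn}); equivalently, the bilinear form $[x,y] := F_{s_1 t_1}\cdots F_{s_n t_n} x^{\vec{s}} y^{\vec{t}}$ satisfies $[x,y] = [y,x]$. Repeating the previous paragraph with the roles of $B_n^{2,2}$ and $NS_n^{2,2}$ interchanged then yields $x \in NS_n^{2,2}$ iff~(\ref{Fxy}) holds for all $y \in B_n^{2,2}$.

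I expect no real obstacle here: the substantive work is already in Theorem~\ref{mainthm}, and the corollary is essentially unpacking the definition of dual cone and invoking the symmetry of $F_{st}$. The only point deserving a sentence of justification is the claim that the dualized version of~(\ref{Fn}) still has the same matrix representation, which is immediate from inspecting the explicit $F_{st}$.
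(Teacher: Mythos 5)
Your argument is correct and follows essentially the same route as the paper: invoke Theorem~\ref{mainthm} to identify $(B_n^{2,2})^*$ with the image of $NS_n^{2,2}$ under the index-lowering map, apply reflexivity of polyhedral cones (Lemma~\ref{basic}) to unfold membership in the double dual, and use the symmetry $F_{st}=F_{ts}$ to obtain the ``resp.''\ half. The only difference is cosmetic --- the paper establishes the bracketed statement first and then dualizes, while you proceed in the opposite order.
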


\begin{proof}
Theorem~\ref{mainthm} and Lemma~\ref{basic}.\ref{basicdual} together imply the statement in the brackets. Lemma~\ref{basic}.\ref{doubledual} then shows that $y^{\vec{s}}$ lies in $B_n^{2,2}$ if and only if~(\ref{Fxy}) holds for all $x^{\vec{s}}\in NS_n^{2,2}$. Renaming $x^{\vec{s}}\leftrightarrow y^{\vec{s}}$ proves that $x^{\vec{s}}$ lies in $B_n^{2,2}$ if and only if
$$
F_{t_1s_1}\cdots F_{t_ns_n} x^{s_1\ldots s_n} y^{t_1\ldots t_n} \geq 0
$$
for all $y^{\vec{s}}$ in $NS_n^{2,2}$. This proves the claim since $F_{st}=F_{ts}$.
\end{proof}

As far as we can see, there is little relation between a (facet) Bell inequality and its associated (extremal) no-signaling box. For example, again in the numbering of~\cite{PBS,Sliwa}, inequality \#10 corresponds to box \#4, although inequality \#10 is only violated by boxes \#25 and \#29~\cite{PBS}. The only obvious statement along these lines is that local realistic no-signaling boxes, i.e.~those lying in $B_n^{2,2}$, correspond to trivial Bell inequalities, i.e.~those lying in $(NS_n^{2,2})^*$, not violated by any no-signaling boxes:

\begin{prop}
The correlations $x^{\vec{s}}$ lie in $B_n^{2,2}$ if and only if $F_{s_1t_1}\cdots F_{s_nt_n}x^{t_1\ldots t_n}\in (NS_n^{2,2})^*$.
\end{prop}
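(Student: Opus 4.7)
The plan is to derive this directly from Corollary~\ref{cordual}, using the symmetry $F_{st}=F_{ts}$ of the lowering tensor.

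First, I would unpack the right-hand condition. Writing $\Phi(x)_{\vec{s}} := F_{s_1t_1}\cdots F_{s_nt_n}\,x^{t_1\ldots t_n}$, membership $\Phi(x)\in(NS_n^{2,2})^*$ means by definition that the pairing $\Phi(x)_{\vec{s}}\,y^{\vec{s}}$ is nonnegative for every $y^{\vec{s}}\in NS_n^{2,2}$, i.e.
\[
F_{s_1t_1}\cdots F_{s_nt_n}\,x^{t_1\ldots t_n}\,y^{s_1\ldots s_n}\;\geq\;0\qquad\forall\,y\in NS_n^{2,2}.
\]

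Next I would apply Corollary~\ref{cordual}, which characterizes $x^{\vec{s}}\in B_n^{2,2}$ by
\[
F_{s_1t_1}\cdots F_{s_nt_n}\,x^{s_1\ldots s_n}\,y^{t_1\ldots t_n}\;\geq\;0\qquad\forall\,y\in NS_n^{2,2}.
\]
Renaming the dummy index pairs $s_i\leftrightarrow t_i$ and then invoking $F_{st}=F_{ts}$ (read off the explicit matrix in the Lemma preceding Theorem~\ref{mainthm}) turns this into the previous display, which matches the two characterizations.

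There is no serious obstacle beyond the bookkeeping of upper versus lower indices; all the content is already in Corollary~\ref{cordual}. Conceptually, the Proposition is the dual companion of Theorem~\ref{mainthm}: the same index-lowering map $\Phi$ that identifies $NS_n^{2,2}$ with $(B_n^{2,2})^*$ also identifies $B_n^{2,2}$ with $(NS_n^{2,2})^*$, which reflects the double-duality $(B_n^{2,2})^{**}=B_n^{2,2}$ together with the self-dual nature of $Sq$ under $F$.
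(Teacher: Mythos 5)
Your argument is correct and is essentially the paper's own proof, which simply cites Corollary~\ref{cordual} together with Lemma~\ref{basic}.\ref{basicdual}: you unpack membership in $(NS_n^{2,2})^*$ via the definition of the dual cone, match it against the inequality in Corollary~\ref{cordual} after renaming dummy indices, and close the gap with $F_{st}=F_{ts}$. The index bookkeeping you flag is exactly the (only) content of the step, and you handle it correctly.
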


\begin{proof}
Corollary~\ref{cordual} together with Lemma~\ref{basic}.\ref{basicdual}.
\end{proof}

All our results so far concern the case of Bell scenarios with two binary observables per party. Could it be that Theorem~\ref{mainthm} holds in greater generality?

\begin{prop}
For $(k,l)\neq (2,2)$ and any $n\in\N$, there is no polyhedral duality in the $(n,k,l)$ Bell scenario between no-signaling boxes and Bell inequalities.
\end{prop}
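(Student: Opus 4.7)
The plan is to derive a contradiction from a simple cardinality count. A polyhedral duality would be a linear isomorphism $NS_n^{k,l}\cong(B_n^{k,l})^*$, which in particular bijects facets of $NS_n^{k,l}$ with extreme rays of $B_n^{k,l}$; I will exhibit these two counts and show that they disagree whenever $(k,l)\neq(2,2)$ in the nontrivial regime $k,l\geq 2$.

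The Bell-side count is immediate from the obvious generalization of~(\ref{defB}): the cone $B_n^{k,l}$ is the conic hull of the local deterministic strategies, each specified by a choice of outcome $a_j(s)\in\{1,\ldots,l\}$ for every party $j$ and every setting $s$, giving $l^{nk}$ extreme rays and no others. For the no-signaling side I would use the tensor-product description: extending~(\ref{NSmax}) to general $(k,l)$, one has $NS_n^{k,l}=C^{k,l}\otimes_{\max}\cdots\otimes_{\max}C^{k,l}$, where $C^{k,l}$ denotes the single-party cone of unnormalized conditional probability vectors. A direct analysis of $C^{k,l}$ (the cone over a product of $k$ copies of the $(l-1)$-simplex) shows it has exactly $kl$ facets, namely the nonnegativity inequalities $p(t|s)\geq 0$. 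Dualising via~(\ref{minmaxdual}) yields $(NS_n^{k,l})^*=(C^{k,l})^*\otimes_{\min}\cdots\otimes_{\min}(C^{k,l})^*$, whose extreme rays are simple tensors of extreme rays of the factors; since each factor $(C^{k,l})^*$ has exactly $kl$ extreme rays (one for each facet of $C^{k,l}$), this gives $|\mathrm{facets}\,NS_n^{k,l}|=(kl)^n$.

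Combining, a polyhedral duality would force $l^{nk}=(kl)^n$, equivalently $l^{k-1}=k$. A short case split handles this: $k=2$ forces $l=2$, while for $k\geq 3$ one has $l^{k-1}\geq 2^{k-1}>k$. Thus $(k,l)=(2,2)$ is the only solution with $k,l\geq 2$, contradicting the hypothesis.

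The main obstacle I anticipate is the identification of extreme rays of a $\otimes_{\min}$-product with the simple tensors of extreme rays of its factors, since this fact is what converts the tensor-product description of $(NS_n^{k,l})^*$ into the count $(kl)^n$. A priori, the cone generated by simple tensors could acquire extreme rays of non-product form in higher tensor powers, and excluding this requires a separation argument via product functionals $f\otimes g$ with $f\in C_1^*$, $g\in C_2^*$. Although a standard fact from the Namioka--Phelps theory recorded in Appendix~\ref{conesapp}, it is the one nontrivial structural input; once it is invoked, the rest of the proof is arithmetic.
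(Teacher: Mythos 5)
Your proof is correct and follows essentially the same route as the paper: both arguments rest on the cardinality mismatch between the $l^{kn}$ extreme rays of the Bell cone and the $(kl)^n$ facets of the no-signaling cone, reducing to the equation $l^{k-1}=k$ whose only solution with $k,l\geq 2$ is $(2,2)$. The paper simply cites these two counts as well-known, whereas you supply the tensor-product justification; the one step you rightly flag (that the simple tensors of extreme rays exhaust and realize the extreme rays of the $\otimes_{\min}$-product, i.e.\ that every positivity constraint is genuinely a facet) is exactly what the paper's appeal to folklore absorbs.
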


\begin{proof}
If the duality would exist, then in particular the number of local deterministic points of the Bell cone would coincide with the number of facets of the no-signaling cone. It is well-known that the former is $l^{kn}$, while the latter is $(lk)^n$. Since $k,l\geq 2$ for any nontrivial Bell scenario, the equation $l^{kn}=(lk)^n$, or $l^{k-1}=k$, implies $l=k=2$.
\end{proof}

\section{Applications}
\label{apps}

\subsection{Extending to more parties.}

What might Theorem~\ref{mainthm} be useful for? One thing one could ask is this: given the several known families of Bell inequalities in $(n,2,2)$ scenarios (e.g.~Mermin--Klyshko~\cite{GB}, generalized Svetlichny~\cite{gSv}), what are the families of no-signaling boxes associated to these inequalities? However, since it is unclear what the significance of these boxes would be, we have not done this yet. Instead, we have noticed that the terms appearing in the iterative definition of the Mermin--Klyshko inequalities (e.g.~\cite[eq.~(4)]{GB}) are precisely the ones which appear in the transformation~\ref{traf}. With the Mermin--Klyshko term $\mathcal{M}_n$ defined inductively by $\mathcal{M}_0=\mathcal{M}'_0=1$ and
\begin{align}
\begin{split}
\label{MK2}
\mathcal{M}_{n+1} &= \mathcal{M}_n \cdot \tfrac{1}{2}\left(A^{-1}_{n+1} + A^{+1}_{n+1}\right) + \mathcal{M}'_n \cdot \tfrac{1}{2} \left(A^{-1}_{n+1} - A^{+1}_{n+1}\right) \\
\mathcal{M}'_{n+1} &= \mathcal{M}'_n\cdot \tfrac{1}{2}\left(A^{-1}_{n+1} + A^{+1}_{n+1} \right) - \mathcal{M}_n \cdot \tfrac{1}{2} \left( A^{-1}_{n+1} - A^{+1}_{n+1}\right) ,
\end{split}
\end{align}
the $n$-party Mermin--Klyshko inequality is given by, in our notation,
\beq
\label{MK1}
\langle \mathcal{M}_{n}\rangle - \langle A_1^0\ldots A_n^0 \rangle \geq 0 .
\eeq
We have tried to relate the appearance of the familiar terms in~(\ref{MK2}) to our tensor $F_{st}$. These considerations have led us to a general construction of extending Bell inequalities with two binary observables from $n$ parties to $n+1$ parties. The Mermin--Klyshko family has turned out to be a particular instance of this construction; see the upcoming Example~\ref{MKex}.

We begin the presentation of our construction by first discussing how to extend no-signaling boxes with two binary observables from $n$ parties to $n+1$ parties, and later applying Theorem~\ref{mainthm} to transfer this construction to Bell inequalities. This requires some preparation.

\begin{lem}
\label{extendboxes}
The correlations $x^{s_1\ldots s_{n+1}}$ lie in $NS_{n+1}^{2,2}$ if and only if
\beq
\label{extend}
x^{s_1\ldots s_{n}0} \pm x^{s_1\ldots s_{n}t}
\eeq
for both signs and both $t\in\{-1,+1\}$.
\end{lem}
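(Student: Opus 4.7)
The plan is to reduce the defining inequalities (\ref{nspos}) for $NS_{n+1}^{2,2}$ to those for $NS_n^{2,2}$ by peeling off the last party. Writing $\vec{s}=(\vec{s}',s_{n+1})$ and correspondingly $\vec{t}=(\vec{t}',t_{n+1})$, $\vec{v}=(\vec{v}',v_{n+1})$, I would factor out the party-$(n+1)$ contribution from the product-form tensor $g$. Performing the implicit sum over the three values of $v_{n+1}$---using that the party-$(n+1)$ factor is $1$, $t_{n+1}$, or $0$ in the respective cases $v_{n+1}=0$, $v_{n+1}=s_{n+1}$, $v_{n+1}=-s_{n+1}$---collapses the positivity expression at $(\vec{s},\vec{t})$ to
\begin{equation*}
g_{\vec{v}}(\vec{t},\vec{s})\,x^{\vec{v}} \;=\; \tfrac{1}{2}\,g_{\vec{v}'}(\vec{t}',\vec{s}')\,\bigl(x^{\vec{v}'0} + t_{n+1}\,x^{\vec{v}'s_{n+1}}\bigr),
\end{equation*}
which, up to the harmless factor $\tfrac12$, is the $n$-party expression (\ref{nspos}) evaluated on the tensor in parentheses. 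Nonnegativity for every $\vec{s},\vec{t}\in\{-1,+1\}^{n+1}$ is therefore equivalent to requiring, for each of the four pairs $(s_{n+1},t_{n+1})\in\{-1,+1\}^2$, that the $n$-party tensor $x^{\vec{v}'0}+t_{n+1}x^{\vec{v}'s_{n+1}}$ belong to $NS_n^{2,2}$; renaming $s_{n+1}$ as $t$ and reading $t_{n+1}$ as the $\pm$ sign recovers exactly the four tensors $x^{\vec{s}'0}\pm x^{\vec{s}'t}$ in the statement.

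A more conceptual second route uses the recursion $NS_{n+1}^{2,2}=NS_n^{2,2}\omax Sq$, which follows from (\ref{NSmax}) together with the associativity of $\omax$, combined with the general fact that a point $x$ lies in $V\omax W$ iff contracting $x$ against each extreme ray of $W^*$ produces a point of $V$. The four extreme rays of $Sq^*$ recorded in (\ref{Sqdual}) are $(\pm 1,+1,0)$ and $(0,+1,\pm 1)$, and contracting them into the last slot of $x^{\vec{s}'s_{n+1}}$ reproduces exactly the four combinations $x^{\vec{s}'0}\pm x^{\vec{s}'t}$. The only real obstacle in either approach is the bookkeeping---matching the binary labels $(s_{n+1},t_{n+1})\in\{-1,+1\}^2$ coming out of (\ref{nspos}) to the ``both signs and both $t$'' phrasing of the lemma---and no deeper ingredient is needed beyond the product structure of $g$ and the definitions already given.
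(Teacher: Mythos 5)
Your proposal is correct, and your second route is precisely the paper's own proof: it invokes Lemma~\ref{maxcrit} together with the factorization~(\ref{NSmax}) (and associativity of $\omax$), identifying the four combinations $x^{\vec{s}'0}\pm x^{\vec{s}'t}$ as the contractions of the last tensor slot against the four extreme rays of $Sq^*$ listed in~(\ref{Sqdual}). Your first route is a genuinely more elementary alternative that the paper does not spell out: by summing out $v_{n+1}$ in the product-form tensor $g$ you show directly that each $(n+1)$-party positivity condition~(\ref{nspos}) equals $\tfrac{1}{2}$ times the corresponding $n$-party condition evaluated on $x^{\vec{v}'0}+t_{n+1}x^{\vec{v}'s_{n+1}}$, and the bookkeeping matching $(s_{n+1},t_{n+1})$ to the four tensors in the statement checks out. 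The direct computation buys self-containedness (no appeal to the $\omax$ machinery or to extreme rays), while the paper's route is shorter and makes the structural reason transparent, namely that $NS_{n+1}^{2,2}=NS_n^{2,2}\omax Sq$; either argument is complete.
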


\begin{proof}
This follows from an application of Lemma~\ref{maxcrit} to~(\ref{NSmax}), since each of the four possible forms of~(\ref{extend}) is equal to $h_v x^{s_1\ldots s_n v}$ for $h_v\in\ex(Sq^*)$ one of the four extreme rays from~(\ref{Sqdual}).
\end{proof}

The no-signaling cone $NS_n^{2,2}$ has many symmetries. In particular, flipping the outcome of some of the observables, i.e.~applying any number of transformations of the form $A^{s}_j\to -A^s_j$, is a symmetry in the sense that it is a linear isomorphism mapping mapping $NS_n^{2,2}$ onto itself.

More specifically, flipping the outcomes of some of the observables is a symmetry which is inverse to itself; in other words, applying it twice gives the identity map. Such a symmetry is an \emph{involution}, by which we mean a linear isomorphism
$$
\iota:NS_n^{2,2}\stackrel{\cong}{\longrightarrow} NS_n^{2,2}
$$
satisfying $\iota^2=\id$.

Like every involution on a vector space, $\iota$ splits $\otimes^n\R^{3}$ into the direct sum of an \emph{odd} subspace (eigenvalue $-1$) and an \emph{even} subspace (eigenvalue $+1$). 

\begin{prop}
\label{construction}
Let $x^{\vec{s}},y^{\vec{s}}\in\otimes^n\R^{3}$ be such that $\iota(x^{\vec{s}})=x^{\vec{s}}$ and
\beq
\label{alphabeta}
(x^{\vec{s}}\pm y^{\vec{s}})\in NS_n^{2,2}
\eeq
for both choices of sign. Then the correlations defined by
$$
z^{s_1\ldots s_{n+1}} = \left\{
 \begin{array}{ccl} y^{\vec{s}} & \mathrm{if} & s_{n+1}=-1 \\ x^{\vec{s}} & \mathrm{if} & s_{n+1} = 0 \\ \iota(y^{\vec{s}}) & \mathrm{if} & s_{n+1}=+1 \end{array} \right.
$$
are in $NS_{n+1}^{2,2}$.
\end{prop}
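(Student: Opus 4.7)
The plan is to reduce the claim directly to the recognition criterion provided by Lemma~\ref{extendboxes}. Namely, writing $z^{\vec{s},s_{n+1}}$ as a vector in $NS_{n+1}^{2,2} = NS_n^{2,2} \omax Sq$ amounts to checking that the four combinations
$$
z^{s_1\ldots s_n 0} \pm z^{s_1\ldots s_n t}, \qquad t\in\{-1,+1\},
$$
all lie in $NS_n^{2,2}$. So the first step is simply to substitute the definition of $z^{\vec{s},s_{n+1}}$ and identify these four vectors explicitly: for $t=-1$ they are $x^{\vec{s}}\pm y^{\vec{s}}$, and for $t=+1$ they are $x^{\vec{s}}\pm \iota(y^{\vec{s}})$.

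The two combinations with $t=-1$ are in $NS_n^{2,2}$ immediately by the hypothesis~(\ref{alphabeta}). For the two combinations with $t=+1$, I would use that $\iota$ is by assumption a linear automorphism of $NS_n^{2,2}$, so it sends the elements $x^{\vec{s}}\pm y^{\vec{s}}\in NS_n^{2,2}$ to $\iota(x^{\vec{s}})\pm \iota(y^{\vec{s}})\in NS_n^{2,2}$. The eigenvalue hypothesis $\iota(x^{\vec{s}}) = x^{\vec{s}}$ then converts this to $x^{\vec{s}}\pm \iota(y^{\vec{s}})\in NS_n^{2,2}$, as desired.

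I do not anticipate a serious obstacle: the proof is essentially bookkeeping, and the ``structural'' insight is packaged entirely in Lemma~\ref{extendboxes} (extendability to one more party is controlled by the four extreme rays of $Sq^*$) together with the fact that outcome-flipping involutions preserve $NS_n^{2,2}$. The one thing I would make sure to state cleanly is the role of the even-subspace hypothesis: it is exactly what is needed to ``move'' $\iota$ from $y$ onto $x$ without disturbing the $t=-1$ combinations, so that a single symmetry operation establishes both pairs of inequalities simultaneously.
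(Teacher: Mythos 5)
Your proof is correct and follows exactly the route the paper intends: its own proof is the one-line remark that the claim ``follows directly from the assumptions upon an application of Lemma~\ref{extendboxes},'' and your argument simply fills in that application, including the use of $\iota$ as an automorphism of $NS_n^{2,2}$ together with $\iota(x^{\vec{s}})=x^{\vec{s}}$ to handle the $t=+1$ combinations.
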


\begin{proof}
This follows directly from the assumptions upon an application of Lemma~\ref{extendboxes}.
\end{proof}

Recognizing whether a given $(n+1)$-partite no-signaling box $z^{s_1\ldots s_{n+1}}$ can be constructed in this way for a given $\iota$ is not difficult: one only needs to check whether $z^{\vec{s}0}\in\otimes^n\R^{3}$ is even with respect to $\iota$, and whether $\iota$ maps $z^{\vec{s},-1}\in\otimes^n\R^{3}$ to $z^{\vec{s},+1}\in\otimes^n\R^{3}$.

We have found that 16 out of the 46 classes of extremal boxes for $n=3$ are of this form for some appropriate involution $\iota$; see Table~\ref{canboxes}.

However, Proposition~\ref{construction} is not very useful yet, since it is unclear when condition~(\ref{alphabeta}) is satisfied. In order to remedy this situation, it is possible to specialize further to those $x^{\vec{s}}$ and $y^{\vec{s}}$ which can be constructed from a \emph{single} $w^{\vec{s}}\in NS_n^{2,2}$ in such a way that~(\ref{alphabeta}) is automatic. To this end, we assume that we have another involution
$$
\kappa : NS_n^{2,2} \longrightarrow NS_n^{2,2}
$$
at our disposal, which commutes with the previous one, $\iota\kappa=\kappa\iota$. (In many cases, this is automatic: for example, if $\kappa$ is the involution which flips the outcomes of all observables, then $\kappa$ commutes with all other involutions generated by flips of outcomes, relabelings of observables, and permutations of parties.)

Due to their commutativity, $\iota$ and $\kappa$ are simultaneously diagonalizable, so that they define a decomposition of $\otimes^n\R^{3}$ into a direct sum of four linear subspaces indexed by the eigenvalue pairs $(\lambda_\iota,\lambda_\kappa)$ with $\lambda_\iota,\lambda_\kappa\in\{-1,+1\}$. The upcoming equation~(\ref{noeigen}) states that the component of the given $w^{\vec{s}}$ in the eigenspace associated to $(\lambda_\iota,\lambda_\kappa)=(-1,+1)$ vanishes.

\begin{prop}
\label{construction2}
Let $w^{\vec{s}}\in NS_n^{2,2}$ be such that
\beq
\label{noeigen}
\iota\kappa(w^{\vec{s}}) = \kappa(w^{\vec{s}}) - \iota(w^{\vec{s}}) + w^{\vec{s}} .
\eeq
Then the correlations defined by
\beq
z^{s_1\ldots s_{n+1}} = \left\{
 \begin{array}{ccl} w^{\vec{s}} - \kappa(w^{\vec{s}}) & \mathrm{if} & s_{n+1}=-1 \\ w^{\vec{s}} + \kappa(w^{\vec{s}}) & \mathrm{if} & s_{n+1} = 0 \\ \iota(w^{\vec{s}}) - \iota\kappa(w^{\vec{s}}) & \mathrm{if} & s_{n+1}=+1 \end{array} \right.
\eeq
are in $NS_{n+1}^{2,2}$.
\end{prop}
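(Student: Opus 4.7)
The plan is to deduce the statement from Proposition~\ref{construction} with the natural splitting
$$
x^{\vec{s}} := w^{\vec{s}} + \kappa(w^{\vec{s}}), \qquad y^{\vec{s}} := w^{\vec{s}} - \kappa(w^{\vec{s}}).
$$
With these choices the three slices of $z$ prescribed in the statement are exactly $y^{\vec{s}}$, $x^{\vec{s}}$, and $\iota(y^{\vec{s}}) = \iota(w^{\vec{s}}) - \iota\kappa(w^{\vec{s}})$ for $s_{n+1} = -1, 0, +1$ respectively, so the conclusion will follow as soon as the two hypotheses of Proposition~\ref{construction} are verified.

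First I would check the cone membership~(\ref{alphabeta}). Because $\kappa$ is a linear automorphism of $NS_n^{2,2}$ it maps this cone onto itself, and hence
$$
x^{\vec{s}} + y^{\vec{s}} = 2\, w^{\vec{s}} \in NS_n^{2,2}, \qquad x^{\vec{s}} - y^{\vec{s}} = 2\, \kappa(w^{\vec{s}}) \in NS_n^{2,2}.
$$
Next I would verify the invariance $\iota(x^{\vec{s}}) = x^{\vec{s}}$. This is exactly what hypothesis~(\ref{noeigen}) is engineered to deliver:
$$
\iota(x^{\vec{s}}) = \iota(w^{\vec{s}}) + \iota\kappa(w^{\vec{s}}) = \iota(w^{\vec{s}}) + \bigl(\kappa(w^{\vec{s}}) - \iota(w^{\vec{s}}) + w^{\vec{s}}\bigr) = w^{\vec{s}} + \kappa(w^{\vec{s}}) = x^{\vec{s}},
$$
the middle equality being (\ref{noeigen}).

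With both hypotheses of Proposition~\ref{construction} in place, that proposition directly produces $z^{s_1 \ldots s_{n+1}} \in NS_{n+1}^{2,2}$. There is no serious obstacle here: the entire content is compressed into (\ref{noeigen}), which, after decomposing $w^{\vec{s}}$ along the joint eigenspaces of $\iota$ and $\kappa$, is equivalent to the vanishing of the $(\lambda_\iota, \lambda_\kappa) = (-1,+1)$ component, ensuring that $w^{\vec{s}} + \kappa(w^{\vec{s}})$ sits in the $\iota = +1$ eigenspace and feeds correctly into the previous proposition.
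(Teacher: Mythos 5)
Your proof is correct and is essentially identical to the paper's: both apply Proposition~\ref{construction} with $x^{\vec{s}}=w^{\vec{s}}+\kappa(w^{\vec{s}})$ and $y^{\vec{s}}=w^{\vec{s}}-\kappa(w^{\vec{s}})$, noting that $x^{\vec{s}}\pm y^{\vec{s}}$ equals $2w^{\vec{s}}$ or $2\kappa(w^{\vec{s}})$ (hence lies in $NS_n^{2,2}$) and that~(\ref{noeigen}) gives $\iota(x^{\vec{s}})=x^{\vec{s}}$. You merely spell out the computations the paper leaves implicit.
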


\begin{proof}
Apply~\ref{construction} with $y^{\vec{s}}=w^{\vec{s}}-\kappa(w^{\vec{s}})$ and $x^{\vec{s}}=w^{\vec{s}}+\kappa(w^{\vec{s}})$, since then the condition $x^{\vec{s}}\pm y^{\vec{s}}\in NS_n^{2,2}$ is automatic, and $\iota(x^{\vec{s}})=x^{\vec{s}}$ is guaranteed by~(\ref{noeigen}).
\end{proof}

As before, it is simple to check whether a given no-signaling box can be constructed in this way for a given $\iota$ and $\kappa$ as follows: $z^{\vec{s}0}$ needs to be even under $\kappa$, while $z^{\vec{s},-1}$ and $z^{\vec{s},+1}$ should both be odd under $\kappa$; also, $\iota$ should map $z^{\vec{s},-1}$ to $z^{\vec{s},+1}$; if all these conditions hold, then $\vec{w}$ can be constructed as $\tfrac{1}{2}(z^{\vec{s}0}+z^{\vec{s},-1})$, and~(\ref{noeigen}) is automatic.

For $n=3$, we have displayed those extremal no-signaling which can be constructed in this way from $n=2$, together with the corresponding involutions, in Table~\ref{canboxes} as those entries for which $\kappa$ is listed.

\newcommand{\llra}{\longleftrightarrow}
\renewcommand*\arraystretch{3.0}

\begin{table}
\begin{tabular}{cccc}
\phantom{xx}Box\phantom{xx} & \phantom{xx}New party\phantom{xx} & $\iota$ & $\kappa$ \\
\hline
$1$ & $A$ & $\id$ & $\id$ \\\hline
$2$ & $B$ & $C_1\llra -C_1$ & \pbox{20cm}{$C_0\llra -C_0$, $C_1\llra -C_1$} \\\hline
$3$ & $B$ & $C_1\llra -C_1$ & \pbox{20cm}{$A_0\llra -A_0$, $C_1\llra -C_1$} \\\hline
$4$ & $A$ & \pbox{20cm}{$B_0\llra C_0$, $B_1\llra -C_1$} & \pbox{20cm}{$B_0\llra -B_0$, $C_0\llra -C_0$} \\\hline
$6$ & $C$ & $B_1\llra -B_1$ & \pbox{20cm}{$A_1\llra -A_1$} \\\hline
$7$ & $A$ & \pbox{20cm}{$B_0\llra B_1$, $C_1\llra -C_1$} & \pbox{20cm}{$B_0\llra -B_0$, $B_1\llra -B_1$} \\\hline
$8$ & $B$ & $A_1\llra -A_1$ & \pbox{20cm}{$A_0\llra -A_0$, $C_0\llra -C_0$, $C_1\llra -C_1$} \\\hline
$9$ & $C$ & $B_1\llra -B_1$ & \pbox{20cm}{$A_1\llra -A_1$, $B_1\llra -B_1$} \\\hline
$10$ & $C$ & $B_1\llra -B_1$ & \pbox{20cm}{$A_1\llra -A_1$, $B_0\llra -B_0$, $B_1\llra -B_1$} \\\hline
$11$ & $C$ & \pbox{20cm}{$B_0\llra B_1$, $A_1\llra -A_1$} & --- \\\hline
$12$ & $C$ & $A_1\llra -A_1$ & --- \\\hline
$24$ & $B$ & $C_1\llra -C_1$ & --- \\\hline
$25$ & $A$ & \pbox{20cm}{$B_0\llra C_0$, $B_1\llra C_1$} & --- \\\hline
$31$ & $A$ & \pbox{20cm}{$B_0\llra B_1$, $C_0\llra C_1$} & --- \\\hline
$45$ & $C$ & $A_1\llra -A_1$ & \pbox{20cm}{$A_0\llra -A_0$, $A_1\llra -A_1$} \\\hline
$46$ & $A$ & \pbox{20cm}{$B_1\llra -B_1$, $C_1\llra -C_1$} & \pbox{20cm}{$A_0\llra -A_0$, $A_1\llra -A_1$}\\[.5cm]
\end{tabular}
\caption{List of extremal $3$-partite no-signaling boxes in the numbering of~\cite{PBS} which can be constructed as in Proposition~\ref{construction}, together with the required involution $\iota$ defined by specifying the observables on which it acts nontrivially, in the notation of~\cite{PBS}. Those for which $\kappa$ is listed also arise from the more convenient construction of Proposition~\ref{construction2}.}
\label{canboxes}
\end{table}

\renewcommand*\arraystretch{1}

Applying Theorem~\ref{mainthm} directly translates Proposition~\ref{construction2} into a method to extend Bell inequalities from $n$ parties to $n+1$ parties. In the following, we abuse notation a bit by considering $\iota$ and $\kappa$ as involutions acting on covariant tensors, or equivalently, on Bell inequalities as
$$
\iota,\kappa:(B_n^{2,2})^*\longrightarrow (B_n^{2,2})^*.
$$
This should not lead to confusion since, for example, the operation of relabeling the observables of some party is very much the same on Bell inequalities as on no-signaling boxes.

\begin{cor}
If 
\beq
\label{belln}
f_{s_1\ldots s_n} \langle A_1^{s_1}\ldots A_n^{s_n} \rangle \geq 0
\eeq
is a Bell inequality satisfying $\iota\kappa(f_{\vec{s}}) = \iota(f_{\vec{s}}) - \kappa(f_{\vec{s}}) + f_{\vec{s}}$, then 
\begin{align*}
 \tfrac{1}{2}\left[f_{s_1\ldots s_n} - \kappa(f_{s_1\ldots s_n}) - \iota(f_{s_1\ldots s_n}) + \kappa\iota(f_{s_1\ldots s_n}) \right] \langle & A_1^{s_1}\ldots A_n^{s_n}A_{n+1}^{-1}\rangle \\
 + \left( f_{s_1\ldots s_n} + \kappa(f_{s_1\ldots s_n}) \right) \langle & A_1^{s_1}\ldots A_n^{s_n}A_{n+1}^0 \rangle \\
+  \tfrac{1}{2}\left[ f_{s_1\ldots s_n} - \kappa(f_{s_1\ldots s_n}) + \iota(f_{s_1\ldots s_n}) - \kappa\iota(f_{s_1\ldots s_n}) \right] \langle & A_1^{s_1}\ldots A_n^{s_n}A_{n+1}^{+1}\rangle  \geq 0
\end{align*}
is an $(n+1)$-partite Bell inequality.
\end{cor}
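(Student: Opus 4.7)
The plan is to prove this by direct evaluation on the extreme rays of $B_{n+1}^{2,2}$, which is both cleaner and sidesteps some convention-matching issues that would arise from formally transferring Proposition~\ref{construction2} via Theorem~\ref{mainthm}. Since $B_{n+1}^{2,2}$ is by definition the cone generated by the local deterministic points $a_1^{s_1}\cdots a_{n+1}^{s_{n+1}}$, it suffices to verify non-negativity of the claimed expression at each such point.

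Fix such a point and write $\alpha = a_{n+1}^{-1}$, $\beta = a_{n+1}^{+1}$, both in $\{-1,+1\}$, with $a_{n+1}^{0}=1$. Summing the three $s_{n+1}$-terms of the corollary's expression against $a_{n+1}^{s_{n+1}}$ and collecting coefficients yields the tensor
\[
\Bigl(1 + \tfrac{\alpha+\beta}{2}\Bigr)\, f \;+\; \Bigl(1 - \tfrac{\alpha+\beta}{2}\Bigr)\, \kappa(f) \;+\; \tfrac{\beta-\alpha}{2}\, \iota(f) \;+\; \tfrac{\alpha-\beta}{2}\, \iota\kappa(f)
\]
at index $\vec{s}$, contracted with $a_1^{s_1}\cdots a_n^{s_n}$. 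For the two ``diagonal'' cases $\alpha=\beta$, the mixed terms vanish and this reduces immediately to $2f$ (if $\alpha=\beta=1$) or $2\kappa(f)$ (if $\alpha=\beta=-1$). For the two ``off-diagonal'' cases $\alpha=-\beta$, the $f+\kappa(f)$ piece persists and the hypothesis $\iota\kappa(f)=\iota(f)-\kappa(f)+f$, rearranged as $\iota(f)-\iota\kappa(f)=\kappa(f)-f$, is used to eliminate the $\iota$-terms; the expression again collapses to either $2f$ or $2\kappa(f)$.

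In every case the evaluation thus equals $2\,\varphi_{\vec{s}}\, a_1^{s_1}\cdots a_n^{s_n}$ for some $\varphi \in \{f, \kappa(f)\}$. Since $f$ is by assumption a Bell inequality, and since $\kappa$ is a symmetry of the Bell cone $B_n^{2,2}$, the tensor $\kappa(f)$ is also a Bell inequality, so both are non-negative on every $n$-partite local deterministic point. The main obstacle is the off-diagonal bookkeeping where one must apply the hypothesis on $f$ at the right moment to make the $\iota$-terms disappear; once this algebraic collapse is set up correctly, the conclusion is immediate. As an alternative one could deduce the result by transporting $f$ to its corresponding no-signaling box under Theorem~\ref{mainthm}, invoking Proposition~\ref{construction2}, and transporting back, but this requires matching the involutions' sign conventions between $NS_n^{2,2}$ and $(B_n^{2,2})^*$ (note that the corollary's hypothesis has $\iota$ and $\kappa$ swapped relative to the proposition's), and I would expect this route to be messier than the direct verification above.
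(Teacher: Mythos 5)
Your direct verification is correct, and it is a genuinely different route from the paper's. The paper does not argue the corollary on the Bell side at all: it obtains it by transporting Proposition~\ref{construction2} through the duality isomorphism of Theorem~\ref{mainthm} (the corollary is literally the image of that proposition under the $n$-fold tensor power of $F$), which is why the statement is presented without a separate proof. You instead evaluate the proposed $(n{+}1)$-partite functional on the generators $a_1^{s_1}\cdots a_{n+1}^{s_{n+1}}$ of $B_{n+1}^{2,2}$, and your bookkeeping checks out: contracting the three $s_{n+1}$-slices against $(\alpha,1,\beta)$ gives exactly the combination you display, the diagonal cases collapse to $2f$ or $2\kappa(f)$ outright, and in the off-diagonal cases the identity $\iota(f)-\iota\kappa(f)=\kappa(f)-f$ (a rearrangement of the hypothesis, together with the standing assumption $\iota\kappa=\kappa\iota$, which you use tacitly to identify $\kappa\iota(f)$ with $\iota\kappa(f)$) kills the $\iota$-terms and again yields $2f$ or $2\kappa(f)$. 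Nonnegativity then follows because $\kappa$, being generated by outcome flips, observable relabelings and party permutations, preserves $(B_n^{2,2})^*$. What your approach buys is self-containment and elementarity: no appeal to the duality theorem, to Lemma~\ref{extendboxes}, or to the no-signaling side, and no need to match sign conventions for how $\iota,\kappa$ act on covariant versus contravariant tensors. What the paper's route buys is the conceptual link to Table~\ref{canboxes} and the box-extension picture. Your observation that the corollary's hypothesis has the roles of $\iota$ and $\kappa$ on the right-hand side swapped relative to Proposition~\ref{construction2} (so that it is the $(\lambda_\iota,\lambda_\kappa)=(+1,-1)$ component that vanishes, whereas the surrounding text speaks of $(-1,+1)$) is a fair one; your proof works verbatim from the hypothesis as literally stated, which is the right thing to do.
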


In words, the procedure is as follows. One needs to start with any $n$-partite Bell inequality~(\ref{belln}). One decomposes this Bell inequality into a sum of four parts, where each part is an eigenvector for both $\iota$ and $\kappa$; the whole procedure works as long as the part associated to the pair of eigenvalues $(\lambda_\iota,\lambda_\kappa)=(-1,+1)$ vanishes, so that there are at most three nontrivial parts. Now in the part which is odd under $\kappa$ and odd under $\iota$, one adds the observable $A_{-1}^{n+1}$ in each term; in the part which is odd under $\kappa$ but even under $\iota$, one adds $A_{+1}^{n+1}$ instead. The terms even under $\kappa$ stay unchanged: these terms are not full $n$-party correlators after the extension to $n+1$ parties.

Due to the duality, it follows from Table~\ref{canboxes} that 11 of the 46 facet Bell inequalities in the $(3,2,2)$ scenario arise in this way from Bell inequalities in the $(2,2,2)$ scenario.

\begin{expl}[CHSH]
As a basic example, we show to construct the CHSH inequality~\cite{CHSH}. To this end, we start from
$$
\langle A_1^{-1} \rangle + \langle A_1^0 \rangle \geq 0
$$
which is a $1$-party ``Bell inequality''. With respect to the involutions $\iota:A_1^{-1}\leftrightarrow A_1^{+1}$ and $\kappa:A_1^s\leftrightarrow -A_1^s$, this inequality decomposes into the terms
$$
\underbrace{\tfrac{1}{2}A_{-1}-\tfrac{1}{2}A_{+1}}_{(\lambda_\iota,\lambda_\kappa)=(-1,-1)} 
+ \underbrace{\tfrac{1}{2}A_{-1}+\tfrac{1}{2}A_{+1}}_{(\lambda_\iota,\lambda_\kappa)=(+1,-1)}
+ \underbrace{A_0}_{(\lambda_\iota,\lambda_\kappa)=(+1,+1)} \geq 0
$$
so that our construction produces the new inequality
$$
\tfrac{1}{2}\langle A_1^{-1}A_2^{-1}\rangle - \tfrac{1}{2} \langle A_1^{+1}A_2^{-1} \rangle + \tfrac{1}{2} \langle A_1^{-1}A_2^{+1}\rangle + \tfrac{1}{2}\langle A_1^{+1}A_2^{+1}\rangle + \langle A_1^0 A_2^0\rangle \geq 0
$$
which is one version of the CHSH inequality in our notation.
\end{expl}

\begin{expl}[Mermin--Klyshko inequalities]
\label{MKex}
The previous example can be extended to cover the recursive definition of the Mermin--Klyshko inequalities as in~(\ref{MK2}),~(\ref{MK1}). For $n$ parties, we define $\iota$ to be the involution which exchanges $A_j^{-1}\leftrightarrow A_j^{+1}$ for all parties $j$; and $\kappa$ to be the involution flipping $A_n^s\leftrightarrow -A_n^s$. Then it follows from~(\ref{MK2}) by induction that $\iota$ interchanges $\mathcal{M}_n$ with $\mathcal{M}'_n$. Moreover, it also follows that $\mathcal{M}_n$ is odd under $\kappa$, while obviously the second term of~(\ref{MK1}) is even; upon also considering the action of $\iota$, one finds that the $n$-partite Mermin--Klyshko inequality decomposes as
$$
\underbrace{\tfrac{1}{2}\langle\mathcal{M}_n-\mathcal{M}'_n\rangle}_{(\lambda_\iota,\lambda_\kappa)=(-1,-1)} 
+ \underbrace{\tfrac{1}{2}\langle\mathcal{M}_n+\mathcal{M}'_n\rangle}_{(\lambda_\iota,\lambda_\kappa)=(+1,-1)}
+ \underbrace{\langle A_1^0\ldots A_n^0\rangle}_{(\lambda_\iota,\lambda_\kappa)=(+1,+1)} \geq 0
$$
Then, our result shows that
$$
\tfrac{1}{2}\langle(\mathcal{M}_n-\mathcal{M}'_n)\cdot A_{n+1}^{-1}\rangle + \tfrac{1}{2}\langle(\mathcal{M}_n+\mathcal{M}'_n) \cdot A_{n+1}^{+1}\rangle + \langle A_1^0\ldots A_{n+1}^0\rangle \geq 0
$$
is also a valid Bell inequality; rewriting the first two summands as $\mathcal{M}_{n+1}$ from the recursive definition~(\ref{MK2}), one finds that it is the $(n+1)$-partite Mermin--Klyshko inequality.
\end{expl}

\subsection{Full-correlation Bell inequalities.}

As we are now going to show, our duality Theorem~\ref{mainthm} reproduces the classification of full-correlation Bell inequalities in $(n,2,2)$-type scenarios from~\cite{WW,ZB}. This is not surprising, since our transformation~(\ref{traf}) was also used by the authors of~\cite{WW,ZB} in their proof.

Under the duality, a full-correlation Bell inequality corresponds to a full-correlator no-signaling box, i.e.~to a box of the form
$$
\langle A_1^{s_1}\ldots A_n^{s_n}\rangle = \left\{\begin{array}{cl} \varepsilon^{s_1\ldots s_n} & \textrm{if }|s_1|=\ldots =|s_n|=1 \\ 1 & \textrm{if }s_1=\ldots=s_n=0 \\ 0 & \textrm{otherwise} \end{array}\right.
$$
where the $\varepsilon^{s_1\ldots s_n}\in [-1,+1]$ are arbitrary and do not need to satisfy any additional constraints besides lying in $[-1,+1]$. Every such full-correlator box corresponds to the full-correlator Bell inequality
$$
\langle A_1^0\ldots A_n^0\rangle + 2^{-n}\sum_{\vec{s}\in\{-1,+1\}^n} \varepsilon^{s_1\ldots s_n}\sum_{\vec{t}\in\{-1,+1\}^n} F_{s_1t_1}\ldots F_{s_nt_n} \langle A_1^{t_1}\ldots A_n^{t_n}\rangle \geq 0 .
$$
We have now written the sum explicitly, as the domain of $\vec{s}$ is now not $\{-1,0,+1\}^n$, as it was before.

The extremal full-correlation no-signaling boxes are those with $\varepsilon^{\vec{s}}=\pm 1$ for all $\vec{s}\in\{-1,+1\}^n$. (Here, extremal full-correlation means extremal in the cone of full-correlation no-signaling boxes, which does not necessarily imply extremal in the cone of all no-signaling boxes.) The Bell inequalities associated to these boxes give all the facet full-correlation Bell inequalities; they can be summarized in the single nonlinear inequality
$$
\sum_{\vec{s}\in\{-1,+1\}^n}\left|\sum_{\vec{t}\in\{-1,+1\}^n} F_{s_1t_1}\ldots F_{s_nt_n} \langle A_1^{t_1}\ldots A_n^{t_n}\rangle \right| \geq 2^n \langle A_1^0\ldots A_n^0\rangle .
$$
This is precisely the main result of Werner and Wolf~\cite{WW} and \.Zukowski and Brukner~\cite{ZB} written in our notation.

\section{What about the quantum set?}
\label{quantum}

The whole incentive behind the study of no-signaling correlations arises from the counterintuitive nature of quantum correlations. The quantum correlations form a cone $Q_n^{2,2}$, lying between the Bell cone and the no-signaling cone, $B_n^{2,2}\subseteq Q_n^{2,2}\subseteq NS_n^{2,2}$, where the points of $Q_n^{2,2}$ are defined to be those which can be written in the form
$$
\langle A_1^{s_1}\ldots A_n^{s_n}\rangle = \tr\left(\rho(\mathcal{A}_1^{s_1}\otimes\ldots\otimes\mathcal{A}_n^{s_n})\right),
$$
where the $\mathcal{A}_j^s\in M_2(\C)$ are hermitian matrices satisfying $\mathcal{A}_j^s\cdot\mathcal{A}_j^s=\mathbbm{1}$ and $\mathcal{A}_j^0=\mathbbm{1}$, and $\rho\in M_{2^n}(\C)$ is any positive semidefinite matrix; in this definition, we have implicitly made use of the fact that in $(n,2,2)$ scenarios, it is sufficient to consider a qubit Hilbert space for each party~\cite{Mas}. In contrast to the standard conventions, we do not require $\rho$ to be normalized, so that the convex cone $Q_n^{2,2}$ contains precisely all the nonnegative scalar multiples of the usual convex set of quantum correlations.

We have learned in Theorem~\ref{mainthm} that there is a simple isomorphism
$$
\left(B_n^{2,2}\right)^*\stackrel{\cong}{\longrightarrow} NS_n^{2,2}
$$
identifying the Bell cone and the no-signaling cone as duals of each other. Now the obvious question is, how does $Q_n^{2,2}$ it into this picture? Is there another natural set of correlations which we can identify as the image of $\left(Q_n^{2,2}\right)^*$ under this invertible linear map? We argue now that it there is some evidence for the hypothesis that this set is $Q_n^{2,2}$ itself.

Recall that the formalism of Appendix~\ref{conesapp} means that $\left(Q_n^{2,2}\right)^*$ is the convex cone of all \emph{Tsirelson inequalities}, i.e.~the cone of all $g_{\vec{s}}$ for which
$$
g_{\vec{s}}x^{\vec{s}}\geq 0
$$
for all $x^{\vec{s}}\in Q_n^{2,2}$.

\begin{qstn}
\label{hypo}
Is the image of
\beq
\label{FQ}
\left(Q_n^{2,2}\right)^* \longrightarrow \otimes^n\R^{3} ,\qquad g_{s_1\ldots s_n}\mapsto F^{s_1t_1}\cdots F^{s_nt_n}g_{t_1\ldots t_n}
\eeq
equal to $Q_n^{2,2}$?
\end{qstn}

Given that the set of (unnormalized) quantum states, i.e.~the set of positive operators on a finite-dimensional Hilbert space, is self-dual with respect to the Hilbert--Schmidt inner product, it may be reasonable to conjecture the answer to this question to be positive. And indeed, for $n=2$, which is the first nontrivial case, this does turn out to be correct on the level of full (two-party) correlations, i.e.~upon not taking into account the marginal probabilities~\cite{PV}. As we will now see, this self-duality can be regarded as an ``explanation'' of the mysterious Tsirelson bound limiting the nonlocality of quantum correlations.

\subsection{Emergence of Tsirelson's bound.}

One peculiar property of quantum correlations is \emph{Tsirelson's bound}~\cite{Tsi}: The minimal $c\in\R$ for which the CHSH inequality~\cite{CHSH}
$$
\langle A_1^{-1}A_2^{-1} \rangle + \langle A_1^{-1}A_2^{+1} \rangle + \langle A_1^{+1}A_2^{-1} \rangle - \langle A_1^{+1}A_2^{+1} \rangle \leq c\langle A_1^0 A_2^0\rangle
$$
is valid on $Q_n^{2,2}$ is $c=2\sqrt{2}$.

We say that a set $X\subseteq\otimes^n\R^{3}$ \emph{respects the symmetries} if $X$ is invariant under the usual symmetries considered in Bell scenarios: permutations of parties; permutations of the observables of one party; permutations of the outcome of one observable of one party. If $X$ is invariant under all three kinds of symmetries, then it is automatically invariant also under any combination thereof.

\begin{prop}
Let $X\subseteq NS_2^{2,2}$ be a cone which respects the symmetries. If $X$ is self-dual in the sense that the map
\beq
\label{FX}
g_{s_1 s_2}\mapsto F^{s_1t_1} F^{s_2t_2}g_{t_1 t_2}
\eeq
is an isomorphism $X^*\stackrel{\cong}{\longrightarrow}X$, then the maximal value of $X$ in the CHSH inequality is $2\sqrt{2}$.
\end{prop}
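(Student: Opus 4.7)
The plan is to deduce $c=2\sqrt{2}$, where $c$ denotes the supremum of $\phi(x):=x^{-1,-1}+x^{-1,+1}+x^{+1,-1}-x^{+1,+1}$ over $x\in X$ with $x^{00}=1$, by separately establishing the two inequalities $c\geq 2\sqrt{2}$ and $c\leq 2\sqrt{2}$. Both arguments exploit the symmetric bilinear pairing $\langle x,y\rangle_F:=F_{s_1t_1}F_{s_2t_2}x^{s_1s_2}y^{t_1t_2}$, which under self-duality satisfies $\langle x,y\rangle_F\geq 0$ for all $x,y\in X$, together with the identity $\langle x,F^{\otimes 2}(g)\rangle_F=g(x)$ coming from $F^{st}F_{tv}=\delta^s_v$.

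For the lower bound, consider the sign-flipped CHSH functional $g'^*$ with $g'^*_{00}=c$ and $g'^*_{s_1s_2}=H_{s_1s_2}$ on the $\{\pm 1\}\times\{\pm 1\}$ block, where $H$ denotes the Hadamard-type matrix with entries $(+1,+1,+1,-1)$. Symmetry of the scenario (flipping the outcomes of both observables of party~$1$ negates $\phi$) implies that the tight bound on $-\phi$ equals the tight bound on $\phi$, so $g'^*\in X^*$. Self-duality yields $y'^*:=F^{\otimes 2}(g'^*)\in X$. Using $F^{ab}=H_{ab}$ on the $\pm 1$ block and $H^3=2H$, a direct computation shows that $y'^*$ has $y'^{*00}=c$, vanishing marginals, and full-correlator block $2H$; consequently $\phi(y'^*)=8$. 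Maximality of $c$ applied to the normalized point $y'^*/c$ forces $8/c\leq c$, i.e.\ $c\geq 2\sqrt{2}$.

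For the upper bound, fix a CHSH-maximizer $x^*\in X$ with $x^{*00}=1$. Since the set of maximizers is a face of $X$ preserved by every CHSH-preserving symmetry, one may replace $x^*$ by its orbit average under any finite subgroup of such symmetries. I will use three: (i)~party swap; (ii)~the composite $\sigma$ of swapping party~$1$'s observables and flipping the outcome of $A_2^{+1}$, whose action on the $\pm 1$-block is $M\mapsto\sigma_xM\sigma_z$ and which preserves $\phi$ because $\sigma_x(\sigma_x+\sigma_z)\sigma_z=\sigma_x+\sigma_z$; and (iii)~the simultaneous flip $\rho$ of all four observable outcomes, which acts trivially on $M$ but negates both marginal vectors. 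Tracking the action on the Pauli decomposition $M=\alpha_0 I+\alpha_x\sigma_x+\alpha_y(i\sigma_y)+\alpha_z\sigma_z$ of the $\pm 1$-block shows that the orbit-averaged maximizer has zero marginals and satisfies $\alpha_0=\alpha_y=0$ with $\alpha_x=\alpha_z=c/4$. Now let $\tilde x\in X$ be the image of this averaged $x^*$ under the symmetry flipping both outcomes of party~$1$, so that $M^{\tilde x}=-M^{x^*}$ while marginals remain zero. Using $HIH=2I$, $H\sigma_xH=2\sigma_z$, $H\sigma_zH=2\sigma_x$ and $H(i\sigma_y)H=-2(i\sigma_y)$, a block computation gives $\langle x^*,\tilde x\rangle_F = 1-(\alpha_0^2+2\alpha_x\alpha_z-\alpha_y^2) = 1-c^2/8$, and self-duality forces this to be nonnegative, hence $c^2\leq 8$ and $c\leq 2\sqrt{2}$.

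The main obstacle is step~(ii) of the upper-bound argument: exhibiting a Bell-scenario symmetry that preserves the CHSH functional while, after averaging with party swap, exchanging the two diagonal Pauli components $\alpha_x$ and $\alpha_z$ of $M$. Without this exchange the orbit averaging only pins down the sum $\alpha_x+\alpha_z=c/2$, and $2\alpha_x\alpha_z$ can be made as small or even as negative as one wishes, leaving the self-duality inequality $1-(\alpha_0^2+2\alpha_x\alpha_z-\alpha_y^2)\geq 0$ uninformative about $c$. Verifying that $M\mapsto\sigma_xM\sigma_z$ really arises from a Bell symmetry, tracking its action on both Pauli components and marginals, and closing it with party swap and $\rho$ into a finite group whose orbit average has the claimed rigid form is thus the delicate part of the proof.
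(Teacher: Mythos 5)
Your proof is correct and follows essentially the same route as the paper: depolarize a CHSH maximizer to the isotropic box with correlator block $\tfrac{c}{4}H$ using the symmetry group, then pair it via the self-duality with its outcome-flipped image to obtain $1-c^2/8\geq 0$, while transporting the tight CHSH functional through $F^{\otimes 2}$ to produce a box witnessing $c^2\geq 8$. The paper compresses your explicit Pauli-decomposition bookkeeping into a citation of the depolarization technique of~\cite{MAG}, but the substance is identical, and the step you flag as delicate (that $M\mapsto\sigma_x M\sigma_z$ comes from a genuine observable-relabeling plus outcome-flip and, together with party swap and the global outcome flip, forces $\alpha_0=\alpha_y=0$ and $\alpha_x=\alpha_z$) does go through.
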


\begin{proof}
Let $c$ be the maximal CHSH value of $X$. Then, since $X$ respects the symmetries, one can apply the depolarization technique of~\cite{MAG} in order to show that that the no-signaling box defined by
\beq
\label{Xc}
\langle A_1^{s_1} A_2^{s_2} \rangle = \begin{array}{c|ccc} s_1,\: s_2 & -1 & 0 & +1 \\\hline -1 & \tfrac{1}{4}c & 0 & \tfrac{1}{4}c \\ 0 & 0 & 1 & 0 \\ -1 & \tfrac{1}{4}c & 0 & -\tfrac{1}{4}c \end{array}
\eeq
is in $X$. A short calculation shows that applying the inverse map of~(\ref{FX}), which we know to be $x^{s_1s_2}\mapsto F_{s_1t_1}F_{s_2t_2}x^{t_1t_2}$, transforms this no-signaling box into the inequality 
$$
\tfrac{1}{8}c\langle A_1^{-1}A_2^{-1}\rangle + \tfrac{1}{8}c\langle A_1^{-1}A_2^{+1} \rangle + \tfrac{1}{8}c\langle A_1^{+1}A_2^{-1}\rangle - \tfrac{1}{8}c\langle A_1^{+1}A_2^{+1}\rangle + \langle A_1^0 A_2^0\rangle \geq 0
$$
which is, by assumption, valid on $X$, such that the coefficient $c$ in this inequality is maximal with this property. Again since $X$ respects the symmetries, one can obtain another inequality valid on $X$ by flipping the sign of the first four terms and rewriting,
$$
\tfrac{1}{8}c\langle A_1^{-1}A_2^{-1}\rangle + \tfrac{1}{8}c\langle A_1^{-1}A_2^{+1} \rangle + \tfrac{1}{8}c\langle A_1^{+1}A_2^{-1}\rangle - \tfrac{1}{8}c\langle A_1^{+1}A_2^{+1}\rangle \leq \langle A_1^0 A_2^0\rangle .
$$
Evaluating~(\ref{Xc}) on this inequality produces
$$
\tfrac{1}{32}c^2\cdot 4 \leq 1
$$
or $c\leq 2\sqrt{2}$. This shows that Tsirelson's bound holds for $X$; it remains to prove that the bound is tight.

The bound $c\leq 2\sqrt{2}$ means that the inequality
$$
\langle A_1^{-1}A_2^{-1} \rangle + \langle A_1^{-1}A_2^{+1} \rangle + \langle A_1^{+1}A_2^{-1} \rangle - \langle A_1^{+1}A_2^{+1}\rangle + 2\sqrt{2}\langle A_1^0 A_2^0\rangle \geq 0
$$
lies in $X^*$. Then the isomorphism assumption on~(\ref{FX}) implies that the box
$$
\langle A_1^{s_1} A_2^{s_2} \rangle = \begin{array}{c|ccc} s_1,\: s_2 & -1 & 0 & +1 \\\hline -1 & \tfrac{1}{\sqrt{2}} & 0 & \tfrac{1}{\sqrt{2}} \\ 0 & 0 & 1 & 0 \\ -1 & \tfrac{1}{\sqrt{2}} & 0 & -\tfrac{1}{\sqrt{2}} \end{array}
$$
lies in $X$. Therefore, $c=2\sqrt{2}$.
\end{proof}

\subsection{Extremal no-signaling boxes with quantum realization?}

It seems to be an open question whether there exists an extremal nonlocal no-signaling box which has a quantum realization. In the $(2,2,2)$ and $(3,2,2)$ scenarios, this is known not to be the case~\cite{ycats}. This lets us answer Question~\ref{hypo} in the negative.

\begin{prop}
The image of $(Q_3^{2,2})^*$ under~(\ref{FQ}) is not equal to $Q_3^{2,2}$.
\end{prop}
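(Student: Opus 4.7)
The plan is to argue by contradiction, leveraging Theorem~\ref{mainthm} together with the fact cited from~\cite{ycats}: in the $(3,2,2)$ scenario, the only extremal no-signaling boxes lying in $Q_3^{2,2}$ are the local deterministic ones.

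First I would rephrase the claim inside the framework of Theorem~\ref{mainthm}. Writing $\Phi : (B_3^{2,2})^* \xrightarrow{\cong} NS_3^{2,2}$ for the isomorphism of that theorem, the inclusion $B_3^{2,2} \subseteq Q_3^{2,2}$ dualizes to $(Q_3^{2,2})^* \subseteq (B_3^{2,2})^*$, and the map~(\ref{FQ}) is just the restriction $\Phi\bigl|_{(Q_3^{2,2})^*}$. Hence the hypothesis to be refuted reads $\Phi\bigl((Q_3^{2,2})^*\bigr) = Q_3^{2,2}$; equivalently, since $\Phi$ is a bijection on the ambient space, a facet Bell inequality $f \in (B_3^{2,2})^*$ is a valid Tsirelson inequality if and only if the extremal no-signaling box $\Phi(f) \in NS_3^{2,2}$ lies in $Q_3^{2,2}$.

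Next I would extract the structural consequence of this biconditional. By the Proposition immediately preceding the one on $(k,l) \neq (2,2)$, the local deterministic boxes (elements of $B_3^{2,2}$) correspond under $\Phi^{-1}$ precisely to the trivial positivity-type facet Bell inequalities in $(NS_3^{2,2})^*$. By~\cite{ycats}, the local deterministic boxes are moreover the \emph{only} extremal no-signaling boxes contained in $Q_3^{2,2}$. Combining these two facts with the biconditional from the previous paragraph forces every \emph{nontrivial} facet Bell inequality in the $(3,2,2)$ scenario to be violated by some quantum correlation.

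To close the argument, I would exhibit a concrete nontrivial Sliwa facet Bell inequality which is known to admit no quantum violation; by Table~\ref{correspond} it is paired with a specific nonlocal extremal no-signaling box which, under the self-duality hypothesis, would have to lie in $Q_3^{2,2}$, contradicting~\cite{ycats}. The main obstacle is supplying this external input, i.e.\ identifying a specific Sliwa class whose maximal quantum value coincides with the local bound---for instance via a Navascu\'es--Pironio--Ac\'in semidefinite relaxation certifying the absence of quantum violation. Should this direct route not yield a convenient candidate, a fallback is to refute the reverse inclusion: produce $x, y \in Q_3^{2,2}$ with $F_{s_1 t_1} F_{s_2 t_2} F_{s_3 t_3}\, x^{\vec{s}}\, y^{\vec{t}} < 0$, which directly witnesses $\Phi^{-1}(Q_3^{2,2}) \not\subseteq (Q_3^{2,2})^*$ and hence $\Phi((Q_3^{2,2})^*) \neq Q_3^{2,2}$.
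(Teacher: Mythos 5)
Your reduction is exactly the paper's: assume self-duality, note that by the earlier proposition (locality $\Leftrightarrow$ image in $(NS_3^{2,2})^*$) together with the result of~\cite{ycats} that no nonlocal extremal no-signaling box in the $(3,2,2)$ scenario is quantum, the hypothesis would force every nontrivial facet Bell inequality to have a quantum violation; then one contradiction-witnessing facet inequality with no quantum violation finishes the job. The logic of this reduction is sound.

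The one genuine gap is the step you yourself flag as ``the main obstacle'': you never exhibit the required facet inequality, and without it the argument does not close. This is not a matter of running an NPA relaxation on some unspecified Sliwa class --- the needed input is already in the literature and is the whole point of the paper's choice of example: inequality \#10 of~\cite{Sliwa} is the Guess-Your-Neighbor's-Input inequality of~\cite{GYNI}, a nontrivial facet whose maximal quantum value provably equals its local bound, so it lies in $(Q_3^{2,2})^*$. By Table~\ref{correspond} its image under~(\ref{FQ}) is the nonlocal extremal box \#4 of~\cite{PBS}, written out in~(\ref{gynibox}), which by~\cite{ycats} has no quantum realization. Your fallback route (finding $x,y\in Q_3^{2,2}$ with negative pairing) would also suffice in principle but is equally unsubstantiated as stated; the GYNI-based argument is the clean way to complete your proof, and it is precisely what the paper does.
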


\begin{proof}
If the answer to Question~\ref{hypo} were positive for $n=3$, then the extremal no-signaling box \#4 of~\cite{PBS}, given by the nonvanishing expectation values
\begin{align}
\begin{split}
\label{gynibox}
\langle A_1^{-1} A_2^{+1} A_3^{0\phantom{-}}\rangle &= +1 \\
\langle A_1^{0\phantom{-}} A_2^{-1} A_3^{+1}\rangle &= +1 \\
\langle A_1^{+1} A_2^{0\phantom{-}} A_3^{-1}\rangle &= +1 \\
\langle A_1^{-1} A_2^{-1} A_3^{-1}\rangle &= +1 \\
\langle A_1^{+1} A_2^{+1} A_3^{+1}\rangle &= -1 ,
\end{split}
\end{align}
would have a quantum realization: according to Table~\ref{correspond}, this extremal box arises by applying~(\ref{FQ}) to the facet Bell inequality \#10 of~\cite{Sliwa}, which is the Guess-Your-Neighbor's-Input inequality from~\cite{GYNI}. Since this inequality has no quantum violation, it lies in $(Q_3^{2,2})^*$. If Question~\ref{hypo} had a positive answer, then the image of this inequality under~(\ref{FQ}), which is~(\ref{gynibox}), would be in $Q_3^{2,2}$. However, this is known not to be the case~\cite{ycats}.
\end{proof}

We find it a curious feature of the no-signaling box~(\ref{gynibox}) that it arises naturally in our duality studies, but also in other contexts~\cite{ycats}.

There are many other examples of facet Bell inequalities without quantum violation in $(n,2,2)$ scenarios for $n\geq 4$~\cite{UPB2}. Currently, we do not know whether the no-signaling boxes associated to these have a quantum realization or not; we expect them to be natural candidates for examples of extremal nonlocal no-signaling boxes with a quantum realization.

\bibliographystyle{plain}
\bibliography{polytope_duality}

\appendix

\section{Tensor notation}
\label{tennot}

If $V$ is a $d$-dimensional real vector space, we write $V^*$ for the dual vector space
$$
V^* = \left\{ g : V\ra\R \textrm{ linear}\right\} .
$$
It is a basic fact that $V^{**}=V$ and $(V\otimes W)^*=V^*\otimes W^*$.

We follow the tensor notation commonly used in differential geometry and general relativity~\cite{Einstein}. This means in particular that we write $x^s$ both for a generic element of $V$ and for its components in a fixed basis, with index $s=1,\ldots,d$. The tensor product of $x^s\in V$ and $y^t\in W$ then becomes $x^sy^t\in(V\otimes W)$, which makes the symbol ``$\otimes$'' obsolete in this situation. Similarly, a generic element of $V^*$, i.e.~a linear functional on $V$, is denoted by $g_s$, also with components $g_1,\ldots,g_d$. The application of the functional $g_s$ on the vector $x^s$ gives the real number $g_s x^s$; by the Einstein summation convention, the sum over $s$ is automatic here.

Concerning higher order tensors, we will only need those which are either completely covariant or completely contravariant. The notation $x^{s_1\ldots s_n}$, or $x^{\vec{s}}$, represents a generic element of $\otimes^n V$, while $g_{s_1\ldots s_n}$, or $g_{\vec{s}}$, stands for a generic element of $\otimes^n V^* = (\otimes^n V)^*$. The evaluation of $g_{\vec{s}}$ on $x^{\vec{s}}$ is $g_{\vec{s}}x^{\vec{s}}=g_{s_1\ldots s_n}x^{s_1\ldots s_n}$, where the sum is again implicit.

\section{Convex cones, tensor products, and duality}
\label{conesapp}

We use the formalism of convex cones, their duality and tensor products, a framework which has been applied, for example, to the study of general probabilistic theories~\cite{Barnum2,Barnum}. Since these techniques do not seem to be standard methods in research on quantum nonlocality, we include the following background material.

\begin{defn}
Let $V$ be a finite-dimensional real vector space. A subset $C\subseteq V$ is a \emph{cone} if it satisfies
\begin{enumerate}
\item $C+C\subseteq C$.
\item $\lambda C \subseteq C$ for $\lambda\in\R_{\geq 0}$,
\item $0\in C$,
\item $C$ is closed,
\end{enumerate} 
\end{defn}

It is a somewhat unusual but convenient definition to require all cones to be closed. We will not require $\lin(C)=V$, although all of our examples in the main text have this property. The assumption $0\in C$ is necessary in order to exclude the boring case $C=\emptyset$.

\begin{defn}
If $C\subseteq V$ is a cone, then the \emph{dual cone} $C^*$ is
$$
C^* = \left\{ g_s\in V^* \:|\: g_s x^s \geq 0 \:\:\forall x^s\in C\right\}
$$
\end{defn}

It is easy to verify that this is indeed a cone.

There are two basic ways to specify a cone. For a finite set of vectors $X\subseteq V$, the cone spanned by $X$ is $\cone(X)$, the set of all nonnegative linear combinations of elements of $X$; or, equivalently, the smallest cone containing $X$. We only consider the case of finite $X$, which is sufficient for our purposes, so that $\cone(X)$ is polyhedral.

On the other hand, for a finite set $G\subseteq V^*$, one can regard the elements of $G$ as functionals defining hyperplanes in $V$ which bound a cone. The cone defined in this way is
\beq
\label{coneS2}
\cone(G)^*=\left\{x^s\in V\:|\:\ g_s x^s \geq 0 \:\:\forall g_s\in G\right\} .
\eeq
This notation is consistent in the sense that $\cone(G)^*\subseteq V$ is indeed the dual of $\cone(G)\subseteq V^*$, since $g_s x^s \geq 0\:\forall g_s\in G$ is equivalent to $g_s x^s \geq 0\:\forall g_s\in\cone(G)$.

\begin{lem}
Let $C\subseteq V$ and $D\subseteq V^*$ be cones.
\label{basic}
\begin{enumerate}
\item\label{galoisconnection} $D\subseteq C^*$ if and only if $C\subseteq D^*$.
\item\label{basicdual} $x^s\in C \Leftrightarrow g_s x^s \geq 0\:\:\forall g_s\in C^*$.
\item\label{doubledual} $C^{**}=C$.
\end{enumerate}
\end{lem}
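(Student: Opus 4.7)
The plan is to dispatch the three parts in the order (a), (b), (c), with part (b) carrying the real content and (a) and (c) being essentially formal consequences.

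For part (a), I would simply unfold the definitions: the condition $D \subseteq C^*$ says that every $g_s \in D$ satisfies $g_s x^s \geq 0$ for every $x^s \in C$, and the condition $C \subseteq D^*$ says that every $x^s \in C$ satisfies $g_s x^s \geq 0$ for every $g_s \in D$. Both assertions express exactly the same joint inequality $g_s x^s \geq 0$ for all $(x^s, g_s) \in C \times D$, so the equivalence is a tautology.

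For part (b), the forward implication is immediate from the definition of $C^*$. The reverse implication is the only nontrivial step in the lemma and is a bipolar-type statement; I would prove it by contrapositive using Hahn--Banach separation. Suppose $x^s \notin C$. Since $C$ is closed and convex and $\{x^s\}$ is a compact convex set disjoint from $C$, there exists a nonzero linear functional $g_s \in V^*$ and a real number $\alpha$ such that $g_s y^s > \alpha$ for all $y^s \in C$ and $g_s x^s < \alpha$. Because $0 \in C$, we get $\alpha < 0$, hence $g_s x^s < 0$. To upgrade $g_s$ to an element of $C^*$, I would use the fact that $C$ is a cone: for any $y^s \in C$ and $\lambda \geq 0$ one has $\lambda y^s \in C$, so $\lambda (g_s y^s) > \alpha$ for all $\lambda \geq 0$, which forces $g_s y^s \geq 0$ (otherwise letting $\lambda \to \infty$ would drive the left side to $-\infty$). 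Thus $g_s \in C^*$ and $g_s x^s < 0$, contradicting the hypothesis. The main obstacle is just this separation argument, which relies essentially on the closedness of $C$ built into our definition of cone.

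For part (c), I would observe that by definition $C^{**} = \{x^s \in V \mid g_s x^s \geq 0 \ \forall g_s \in C^*\}$, and part (b) identifies this set with $C$ itself. Strictly speaking part (b) gives the inclusion $C^{**} \subseteq C$; the reverse inclusion $C \subseteq C^{**}$ is immediate because if $x^s \in C$ then by definition of $C^*$ every $g_s \in C^*$ satisfies $g_s x^s \geq 0$, placing $x^s$ in $C^{**}$. Combining the two inclusions yields $C^{**} = C$.
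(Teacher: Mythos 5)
Your proposal is correct and follows essentially the same route as the paper: part (a) by unfolding definitions, part (b) via Hahn--Banach separation for the nontrivial direction, and part (c) as a restatement of part (b). The only difference is that you spell out the standard details of the separation step (using $0\in C$ to force $\alpha<0$ and the cone property to upgrade the separating functional to an element of $C^*$), which the paper leaves implicit in its one-line appeal to Hahn--Banach.
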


\begin{proof}
\begin{enumerate}
\item $D\subseteq C^*$ means that $g_s x^s \geq 0$ for all $x^s\in C$ and $g_s\in D$; so does $C\subseteq D^*$.
\item If $x^s\in C$, then $g_s x^s\geq 0\:\:\forall g_s\in C^*$ by definition of $C^*$. On the other hand, if $x^s\notin C$, then the Hahn-Banach separation theorem guarantees the existence of a $g_s\in C^*$ with $g_s x^s <0$.
\item Together with the identification $V^{**}=V$, this is a restatement of part~\ref{basicdual}.
\end{enumerate}
\end{proof}

\begin{defn}
A vector $x^s\in C$ is an \emph{extreme ray} if, for any decomposition $x^s=x'^s+x''^s$ with $x'^s,x''^s\in C$, there is $\lambda\in\R_{\geq 0}$ with $x^s=\lambda x'^s$.
\end{defn}

We write $\mathrm{ex}(C)$ for the set of extreme rays of $C$. 

\begin{prop}
\label{km}
$C=\cone(\ex(C))$.
\end{prop}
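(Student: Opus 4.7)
The inclusion $\cone(\ex(C)) \subseteq C$ is immediate from the cone axioms: each extreme ray is by definition an element of $C$, and $C$ is closed under addition and nonnegative scaling, so it contains every nonnegative combination of extreme rays. The substantive part is the reverse inclusion, which I would prove by reducing to the finite-dimensional Krein--Milman theorem via a compact base.

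First, I would observe that all cones of interest in this paper are \emph{pointed}, i.e.\ they satisfy $C \cap (-C) = \{0\}$; for instance, every $x^{\vec{s}} \in NS_n^{2,2}$ has $x^{0\ldots 0} \geq 0$, and $x^{0\ldots 0} = 0$ forces all outcome probabilities in~\eqref{nspos} to vanish, hence $x = 0$. Pointedness together with closedness and finite dimensionality produces a linear functional $\ell \in V^*$ with $\ell > 0$ on $C \setminus \{0\}$, and the slice $B := \{x \in C : \ell(x) = 1\}$ is then a compact convex subset of an affine hyperplane. Every nonzero $x \in C$ factors uniquely as $x = \ell(x) \cdot (\ell(x)^{-1} x)$ with $\ell(x)^{-1} x \in B$, so $C = \cone(B)$.

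Next, I would invoke the finite-dimensional Krein--Milman theorem on $B$ to write $B = \mathrm{conv}(\ex(B))$, where $\ex(B)$ denotes the set of extreme points of $B$. The key step is then to check the bijective correspondence between $\ex(B)$ and rays of $\ex(C)$: if $y \in \ex(B)$ admits a decomposition $y = u + v$ with $u, v \in C$ both nonzero, then $\ell(u) + \ell(v) = 1$ and $y = \ell(u) \cdot (\ell(u)^{-1}u) + \ell(v) \cdot (\ell(v)^{-1}v)$ is a convex combination of points in $B$; extremality of $y$ in $B$ forces $\ell(u)^{-1}u = y$, so $u = \ell(u) \cdot y$ is a nonnegative scalar multiple of $y$. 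The degenerate cases $u = 0$ or $v = 0$ are immediate. Hence $y$ is an extreme ray of $C$, and the chain $C = \cone(B) = \cone(\mathrm{conv}(\ex(B))) \subseteq \cone(\ex(C))$ closes the argument.

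The only nonroutine step is the correspondence between extreme points of $B$ and extreme rays of $C$; the Krein--Milman step itself is classical and, in the finite-dimensional setting, admits a short induction on $\dim B$. The conceptual obstacle is really the need for pointedness, which is automatic for all cones appearing in the main text but would have to be assumed in full generality: a half-plane cone in $\R^2$ contains no extreme rays at all, so the statement as written implicitly requires $C$ to contain no lines.
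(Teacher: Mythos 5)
Your argument is correct. The paper offers no proof beyond the citation ``this is the finite-dimensional Krein--Milman theorem,'' and what you have written is precisely the standard derivation of that conical statement from the compact-convex Krein--Milman theorem: pass to a compact base $B=\{x\in C: \ell(x)=1\}$, apply Krein--Milman there, and translate extreme points of $B$ into extreme rays of $C$. So there is no real divergence of approach --- you are filling in the proof of the black box the paper invokes. Two remarks. First, your caveat about pointedness is genuine and correct: the paper's Definition of a cone does not exclude lines, and for a cone containing a line (e.g.\ a closed half-plane in $\R^2$) the set $\ex(C)$ of nonzero extreme rays is empty, so the proposition as stated is false without the implicit hypothesis $C\cap(-C)=\{0\}$; all cones actually used in the main text are pointed, as you verify for $NS_n^{2,2}$, and $B_n^{2,2}\subseteq NS_n^{2,2}$ inherits this. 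Second, two small steps deserve a word if you want the argument airtight: the existence of $\ell$ strictly positive on $C\setminus\{0\}$ is exactly the statement that $C^*$ has nonempty interior, which holds for closed pointed cones in finite dimensions; and compactness of $B$ requires the observation that $\ell$ is bounded below by some $\varepsilon>0$ on the compact set $C\cap S^{d-1}$, whence $B$ is bounded. Neither is difficult, but both use pointedness and closedness, so they belong in the same breath as your counterexample.
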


\begin{proof}
This is the finite-dimensional Krein--Milman theorem.
\end{proof}

\subsection{Tensor products of cones.}

Tensor products of cones were introduced by Namioka and Phelps~\cite{NP}. The maximal and minimal tensor product, which we now define, are commonly used in the formalism of general probabilistic theories~\cite{Barnum2,Barnum}.

For cones $C\subseteq V$ and $D\subseteq W$, the \emph{maximal tensor product} is the cone
\beq
\label{defomax}
C\omax D = \left\{ z^{st}\in\left( V\otimes W\right) \:|\: g_{s} h_{t}z^{st}\geq 0 \:\:\: \forall g_s\in C^*,\: h_t\in D^* \right\} .
\eeq
Thanks to Proposition~\ref{km}, it is enough to check the condition $g_{s}h_{t}z^{st}\geq 0$ only for $g_s\in\ex(C^*)$ and $h_t\in\ex(D^*)$.

It is easy to see that $\omax$ is an associate binary operation on cones. This observation makes the $\omax$-product of several cones $C_1\subseteq V_1,\ldots C_n\subseteq V_n$ well-defined. It is given by
$$
C_1\omax\ldots\omax C_n = \left\{ z^{\vec{s}}\in \left( V_1\otimes\ldots\otimes V_n\right) \:|\: g_{1,s_1} \cdots g_{n,s_n}z^{s_1\ldots s_n}\geq 0 \:\:\:\forall g_{i,s}\in C^*_i\right\} .
$$

Similarly, the \emph{minimal tensor product} of $C$ and $D$ is
$$
C\omin D = \cone\left\{ c^{s}d^{t} \::\: c^s\in C,\: d^t\in D \right\}\subseteq  V\otimes  W
$$
which is also an associative operation on cones, with its $n$-fold version given by
$$
C_1\omin\ldots\omin C_n = \cone\left\{c_1^{s_1}\cdots c_n^{s_n}\::\: c_i^s\in C_i \right\}  \subseteq V_1\otimes\ldots\otimes V_n .
$$

The tensor product $\omin$ is minimal in the sense that it is, by definition, the smallest cone containing all tensor products of elements of its factors.

\begin{lem}
The maximal and minimal tensor products are dual in the following sense:
\label{tensordual}
\begin{align}
\label{minmaxdual} (C_1\omin\ldots\omin C_n)^* &= C^*_1\omax\ldots\omax C^*_n \\
\label{maxmindual} (C_1\omax\ldots\omax C_n)^* &= C^*_1\omin\ldots\omin C^*_n 
\end{align}
\end{lem}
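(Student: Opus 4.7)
The plan is to prove the first identity~(\ref{minmaxdual}) by straightforward unwinding of the two tensor-product definitions, and then to derive~(\ref{maxmindual}) as a formal consequence via the reflexivity $C^{**}=C$ established in Lemma~\ref{basic}.\ref{doubledual}.

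For~(\ref{minmaxdual}), I would show that both sides consist of precisely those $g_{s_1\ldots s_n}\in V_1^*\otimes\cdots\otimes V_n^*$ satisfying
\[
g_{s_1\ldots s_n}\,c_1^{s_1}\cdots c_n^{s_n}\geq 0 \qquad \textrm{for all } c_i^{s_i}\in C_i.
\]
For the left-hand side this is because $C_1\omin\cdots\omin C_n$ is by definition the conic hull of the simple tensors $c_1^{s_1}\cdots c_n^{s_n}$, so nonnegativity of $g_{\vec{s}}z^{\vec{s}}$ on the whole cone reduces, by linearity and the definition of $\cone(\cdot)$, to nonnegativity on these generators. For the right-hand side I would apply the definition~(\ref{defomax}) of the maximal tensor product to the cones $C_i^*\subseteq V_i^*$: a tensor $g_{\vec{s}}$ lies in $C_1^*\omax\cdots\omax C_n^*$ if and only if $h_1^{s_1}\cdots h_n^{s_n}g_{s_1\ldots s_n}\geq 0$ for all $h_i^{s_i}\in(C_i^*)^*$, and $(C_i^*)^*=C_i$ by Lemma~\ref{basic}.\ref{doubledual}. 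The two conditions therefore coincide.

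For~(\ref{maxmindual}), I would apply the just-established~(\ref{minmaxdual}) with each $C_i$ replaced by $C_i^*$, yielding
\[
(C_1^*\omin\cdots\omin C_n^*)^* \;=\; C_1^{**}\omax\cdots\omax C_n^{**} \;=\; C_1\omax\cdots\omax C_n .
\]
Dualizing both sides and invoking $C^{**}=C$ once more gives the desired identity. Here I rely implicitly on the convention of the appendix that only polyhedral cones appear, so that $C_1^*\omin\cdots\omin C_n^*$ is polyhedral and in particular closed, which is needed for the double-dual identity to apply.

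The argument is essentially a direct unwinding of the two definitions, so I do not anticipate any serious obstacle; the main delicacy is purely notational, namely keeping track of which vector space each tensor inhabits and which instance of duality is being used at each step. One minor point worth verifying en route is that $\omax$ and $\omin$ are indeed associative, so that the $n$-fold versions are unambiguous — this in turn follows from their two definitions, respectively in terms of common generators and common linear constraints, both of which are manifestly associative.
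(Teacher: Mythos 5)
Your proof is correct and takes essentially the same route as the paper: both establish~(\ref{minmaxdual}) by observing that the two sides are cut out by the identical family of inequalities $g_{s_1\ldots s_n}c_1^{s_1}\cdots c_n^{s_n}\geq 0$ for $c_i^{s_i}\in C_i$ (using $C_i^{**}=C_i$ on the $\omax$ side), and both derive~(\ref{maxmindual}) by substituting $C_i^*$ for $C_i$ and appealing to Lemma~\ref{basic}. Your explicit remark about closedness of the minimal tensor product, needed for the double-dual step, is a point the paper leaves implicit but does not change the argument.
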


\begin{proof}
A $g_{\vec{s}}\in\left(V_1^*\otimes\ldots\otimes V_n^*\right)$ lies in $(C_1\omin\ldots\omin C_n)^*$ whenever $g_{s_1\ldots s_n}c_1^{s_1}\cdots c_n^{s_n}\geq 0$ for all $c_i^s\in C_i$. On the other hand, it lies in $C^*_1\omax\ldots\omax C^*_k$ whenever $c_1^{s_1}\cdots c_n^{s_n} g_{s_1\ldots s_n} \geq 0$ for all $c_i^s\in C_i^{**}=C_i$, which is the same condition; this shows~(\ref{minmaxdual}).

Then~(\ref{maxmindual}) follows from~(\ref{minmaxdual}) upon replacing each $C_i$ by $C_i^*$ and applying Lemma~\ref{basic}.
\end{proof}

\begin{expl}
Let $M_k(\C)_h$ be the vector space of all hermitian complex $k\times k$-matrices. The positive semidefinite cone
$$
\mathcal{S} = \left\{ \rho\in M_k(\C)_h \:|\: \rho \geq 0 \right\}
$$
is the set of all (unnormalized and mixed) quantum states on $\C^k$. Then, by the definition of separability~\cite{Werner}, $\mathcal{S}\omin\mathcal{S}$ is precisely the set of (unnormalized) separable states on $\C^k\otimes\C^k$; on the other hand, $\mathcal{S}\omax\mathcal{S}$ is the corresponding set of entanglement witnesses.
\end{expl}

\begin{lem}
\label{maxcrit}
For $z^{st}\in V\otimes W$,
$$
z^{st}\in C\omax D \Longleftrightarrow g_s z^{st}\in D \:\:\forall g_s\in \ex(C^*).
$$
\end{lem}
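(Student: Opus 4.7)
The plan is to unfold the definition of the maximal tensor product and apply the basic duality $C^{**}=C$ one tensor factor at a time, then reduce to extreme rays by Krein--Milman.

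First I would start from the definition~(\ref{defomax}): $z^{st}\in C\omax D$ means $g_s h_t z^{st}\geq 0$ for all $g_s\in C^*$ and $h_t\in D^*$. For any fixed $g_s\in C^*$, the contraction $g_s z^{st}$ is a well-defined element of $W$, and the remaining condition $h_t(g_s z^{st})\geq 0$ for all $h_t\in D^*$ is exactly the criterion of Lemma~\ref{basic}.\ref{basicdual} for $g_s z^{st}$ to lie in $D^{**}=D$ (using Lemma~\ref{basic}.\ref{doubledual}). Thus
\[
z^{st}\in C\omax D \:\Longleftrightarrow\: g_s z^{st}\in D \quad \forall g_s\in C^*.
\]

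Next I would reduce the universal quantifier from all of $C^*$ to $\ex(C^*)$. By Proposition~\ref{km}, every $g_s\in C^*$ is a nonnegative linear combination $g_s=\sum_i \lambda_i g^{(i)}_s$ with $\lambda_i\geq 0$ and $g^{(i)}_s\in\ex(C^*)$. Since $z^{st}\mapsto g_s z^{st}$ is linear in $g_s$, we get $g_s z^{st}=\sum_i \lambda_i\, g^{(i)}_s z^{st}$. As $D$ is closed under nonnegative linear combinations, it suffices that each $g^{(i)}_s z^{st}\in D$. Conversely, the forward implication is immediate since $\ex(C^*)\subseteq C^*$. This proves the equivalence.

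The only step that requires any thought is recognizing that the inner quantifier over $h_t\in D^*$ is exactly the bipolar criterion $D^{**}=D$; the reduction to extreme rays is then purely a matter of linearity and convex-conic closure. There is no real obstacle here given the machinery already developed in Lemma~\ref{basic} and Proposition~\ref{km}.
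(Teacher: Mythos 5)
Your proposal is correct and follows essentially the same route as the paper: unfold the definition of $\omax$, recognize the inner quantifier over $h_t\in D^*$ as the membership criterion of Lemma~\ref{basic}.\ref{basicdual} for $g_s z^{st}\in D$, and reduce from $C^*$ to $\ex(C^*)$ via Proposition~\ref{km} and linearity (a reduction the paper already records in the remark following the definition~(\ref{defomax})). Your write-up is just a more detailed version of the paper's two-line argument.
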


\begin{proof}
By Lemma~\ref{basic}.\ref{basicdual}, the condition $g_s z^{st}\in D$ is equivalent to $g_s h_t z^{st}\geq 0$ for all $h_t\in D^*$. With this, the assertion follows from the definition~(\ref{defomax}).
\end{proof}

\end{document}